\renewcommand{\cases}[1]    {\left\{ \begin{array}{ll}#1\end{array}\right.}
\newcommand{\norm}[1]       {\left\| #1\right\|}
\newcounter{theorem}
\newtheorem*{theorem}        {Theorem}
\newtheorem*{definition} {Definition}
\newtheorem{lemma} [theorem] {Lemma}
\newtheorem*{problem}{Hole Problem}
\newcommand{\figlabel}[1]   {\label{fig:#1}}
\newcommand{\lemlabel}[1]   {\label{lem:#1}}
\newcommand{\thmlabel}[1]   {\label{thm:#1}}
\newcommand{\figref}[1]     {Figure~\ref{fig:#1}}
\newcommand{\lemref}[1]     {Lemma~\ref{lem:#1}}
\newcommand{\secref}[1]     {Section~\ref{sec:#1}}
\title[Filling a Hole in a Crease Pattern]{Filling a Hole in a Crease Pattern: \\ Isometric Mapping from Prescribed Boundary Folding}
\author[Erik D. Demaine\qquad Jason S. Ku]{Erik D. Demaine$^*$\qquad Jason S. Ku$^\dagger$}
\thanks{$^*$MIT Computer Science and Artificial Intelligence Laboratory, 32 Vassar Street, Cambridge, MA 02139 USA, \texttt{edemaine@mit.edu}}
\thanks{$^\dagger$MIT Field Intelligence Laboratory, 77 Massachusetts Avenue, Cambridge, MA 02139, USA, \texttt{jasonku@mit.edu}}
\date{}
\begin{document}
\begin{abstract}
Given a sheet of paper and a prescribed folding of its boundary, is there a way to fold the paper's interior without stretching so that the boundary lines up with the prescribed boundary folding? For polygonal boundaries nonexpansively folded at finitely many points, we prove that a consistent isometric mapping of the polygon interior always exists and is computable in polynomial time.
\end{abstract}
\maketitle

\section{Introduction}
Many problems in origami require the folder to map the perimeter of a piece of paper to some specified folded configuration. In the tree method of origami design, circle packing breaks the paper up into polygonal molecules whose perimeter must be mapped to a specific tree. The fold-and-cut problem inputs a set of polygonal silhouettes whose perimeters must be mapped onto a common line. These two problems are well studied; one solution to the molecule folding problem is the universal molecule \cite{Lang} while a solution to the fold-and-cut problem lies in the polygon's straight skeleton \cite{JCDCG98}. Both of these problems can be considered as specific versions of a more general problem: the hole problem. 

Given a crease pattern with a hole in it (an area of the paper with the creases missing), can we fill in the hole with suitable creases? More precisely, given a sheet of paper and a prescribed folding of its boundary, is there a way to fold the paper's interior without stretching so that the boundary lines up with the prescribed boundary folding? This hole problem was originally proposed by Barry Hayes at 3OSME in 2001 with the motivation of finding flat-foldable gadgets with common interfaces satisfying certain properties, such as not-all-equal clauses for an NP-hardness reduction \cite{Bern}.

\begin{figure}[t]
\opt{color}{\includegraphics[width=4.5in]{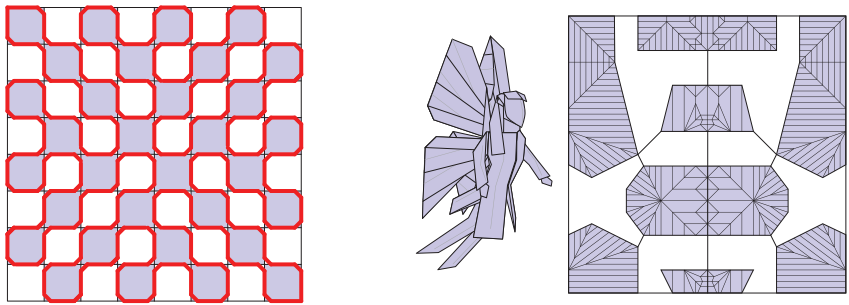}}
\opt{gray}{\includegraphics[width=4.5in]{img/design_bw.eps}}
\caption{(Left) A boundary mapping that might be used to design a color-change checker board model. (Right) An unfinished crease pattern with parts of the crease pattern unknown.}
\figlabel{fig:design}
\end{figure}

This problem formulation can be transformed to solve several existing problems, as well as some new applications (see \figref{fig:design}). If we map the boundary to a line, the polygon is now a molecule to be filled with creases or one half of a fold-and-cut problem cutline. The hole problem can also address problems where the boundary is not mapped to a line, i.e. mappings into the plane or into three dimensions, potentially leading to the algorithmic design of multi-axial bases, color-changes, or complex three-dimensional tessellation or modulars. When trying to combine separately designed parts of an origami model, a solution to the hole problem could be used to design an interfacing crease pattern between them.

In this paper, we show that the hole problem always has a solution for polygonal input boundaries folded at finitely many points under the obvious necessary condition that the input folding is nonexpansive, and present a polynomial-time algorithm to find one. We restrict ourselves to isometry and ignore self-intersection, leaving layer ordering (if possible) as an open problem. \secref{notation} introduces notation and defines the problem. \secref{necessary} discusses the necessary condition which will turn out to be sufficient. \secref{bend} constructs vertex creases satisfying local isometry. \secref{split} propagates the creases. \secref{partition} describes partitioning polygons. \secref{theorem} describes the algorithm. \secref{application} discusses application and implementation. \secref{conclusion} summarizes the results.

\section{Notation and Definitions}
\label{sec:notation}

First some notation and definitions. Let $\norm{\cdot}$ denote Euclidean distance. Given a set of points $A\subseteq B\subset \mathbb{R}^c, c\in\mathbb{Z}^+$ and mapping $f:B\rightarrow\mathbb{R}^d, d\in\mathbb{Z}^+$, we say that $A$ is (\emph{expansive}, \emph{contractive}, \emph{critical}) under $f$ if $\norm{u-v} (<,>,=) \norm{f(u)-f(v)}$ for every $u,v\in A$, with (\emph{nonexpansive}, \emph{noncontractive}, \emph{noncritical}) referring to respective negations. Critical is the same as isometric under the Euclidean metric, but because we will use the term ``isometry" to refer to isometric maps under the shortest-path metric \cite{GFALOP}, we use a different term for clarity. We say two line segments \emph{cross} if their intersection is nonempty. We now prove two relations on crossing segments under certain conditions using the above terminology, including a generalization of Lemma 1 from \cite{Linkage}.
\begin{figure}[h]
\begin{tikzpicture}[>=latex,scale=0.75,x=1in,y=1in] 
\node[anchor=south west] (label) at (0,0){
        \opt{color}{\includegraphics[width=4.5in]{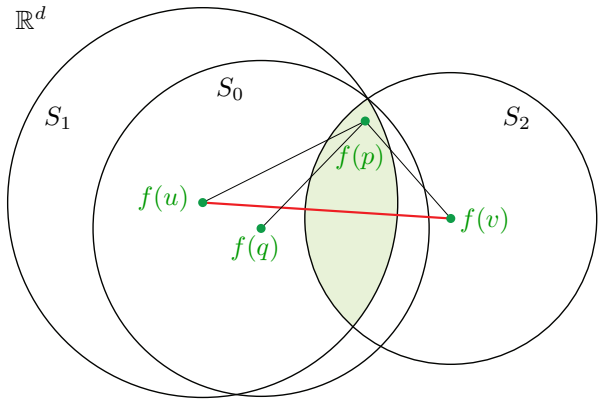}}
        \opt{gray}{\includegraphics[width=4.5in]{img/nonexpan_bw.eps}}
      };
\node[anchor=north west, color=black] at (1,3) {$\mathbb{R}^d$};
\node[anchor=north west, color=mygreen] at (1.85,1.75) {$f(u)$};
\node[anchor=north west, color=mygreen] at (2.5,1.4) {$f(q)$};
\node[anchor=north west, color=mygreen] at (4.1,1.6) {$f(v)$};
\node[anchor=north west, color=mygreen] at (3.23,2.04) {$f(p)$};
\node[anchor=north west, color=black] at (2.4,2.5) {$S_0$};
\node[anchor=north west, color=black] at (1.2,2.3) {$S_1$};
\node[anchor=north west, color=black] at (4.4,2.3) {$S_2$};
\end{tikzpicture}
\vspace{-0.5pc}
\caption{Points $f(u), f(v), f(q), f(p)$ with spheres $S_0, S_1, S_2$. The shaded area $S_1\cap S_2 \subset S_0$ is the region in which $f(p)$ may exist if $\{p,u,v\}$ is nonexpansive under $f$.}
\figlabel{fig:nonexpansive}
\end{figure}
\begin{lemma}
\lemlabel{lem:nonexpansive}
Consider distinct points $p,q,u,v\in\mathbb{R}^2$ with $p,u,v$ not collinear, line segment $(p,q)$ crossing line segment $(u,v)$, and a mapping $f : \{p,q,u,v\}\rightarrow\mathbb{R}^d$. (a) If $\{q,u,v\}$ is critical and $\{p,u,v\}$ is nonexpansive under $f$, then  $\{p,q\}$ is nonexpansive under $f$. (b) If $\{u,v\}$ is critical, and $\{p,u,v\},\{q,u,v\}$ are nonexpansive under $f$, then $\{p,q\}$ is nonexpansive under $f$; additionally if $\{p,q\}$ is critical under $f$, then $\{p,q,u,v\}$ is also. 
\end{lemma}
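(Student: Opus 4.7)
The core tool is a Stewart/Apollonius identity. Let $r$ denote the crossing point, parametrize $r = (1-\alpha)u + \alpha v$ on segment $(u,v)$, and set $r' = (1-\alpha)f(u) + \alpha f(v) \in \mathbb{R}^d$. Both parts supply $\norm{f(u)-f(v)} = \norm{u-v}$, so the cross-term $\alpha(1-\alpha)\norm{u-v}^2$ appears identically in the Stewart expansions of $\norm{p-r}^2$ and $\norm{f(p)-r'}^2$, leaving
\[
  \norm{p-r}^2 - \norm{f(p)-r'}^2 = (1-\alpha)\bigl(\norm{p-u}^2 - \norm{f(p)-f(u)}^2\bigr) + \alpha\bigl(\norm{p-v}^2 - \norm{f(p)-f(v)}^2\bigr).
\]
Nonexpansiveness of $\{p,u,v\}$ makes the right-hand side nonnegative, yielding $\norm{f(p)-r'} \leq \norm{p-r}$; the analogous identity with $q$ in place of $p$ holds under the corresponding hypothesis, with equality when $\{q,u,v\}$ is critical.

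For part (a), I would combine $\norm{f(p)-r'} \leq \norm{p-r}$ with $\norm{f(q)-r'} = \norm{q-r}$ (from criticality of $\{q,u,v\}$) and the fact that $r$ lies on segment $(p,q)$, giving $\norm{p-q} = \norm{p-r} + \norm{r-q}$. The triangle inequality in $\mathbb{R}^d$ then yields $\norm{f(p)-f(q)} \leq \norm{f(p)-r'} + \norm{r'-f(q)} \leq \norm{p-q}$. The first claim of (b) is identical except that $\norm{f(q)-r'} \leq \norm{q-r}$ now holds by nonexpansiveness rather than equality.

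For the addendum in (b), if $\norm{f(p)-f(q)} = \norm{p-q}$, the preceding inequality chain collapses to equalities, so $\norm{f(p)-r'} = \norm{p-r}$ and $\norm{f(q)-r'} = \norm{q-r}$. In the generic case $\alpha \in (0,1)$, the equality case of the displayed identity forces $\norm{f(p)-f(u)} = \norm{p-u}$ and $\norm{f(p)-f(v)} = \norm{p-v}$, and symmetrically for $q$; together with the assumed criticality of $\{p,q\}$ and $\{u,v\}$, this makes all six pairs in $\{p,q,u,v\}$ critical.

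The main obstacle is the degenerate case $\alpha \in \{0,1\}$: then $r$ coincides with $u$ or $v$, and since $p,u,v$ are noncollinear, $r$ lies strictly interior to segment $(p,q)$. The identity along $(u,v)$ only forces criticality of a single $\{p,u\}$ or $\{p,v\}$ pair. To close the gap, I would reapply the Stewart identity along $(p,q)$: write $r = (1-\beta)p + \beta q$ with $\beta \in (0,1)$, and note that the triangle-inequality equality $\norm{f(p)-r'} + \norm{r'-f(q)} = \norm{f(p)-f(q)} = \norm{p-q}$ places $r'$ on segment from $f(p)$ to $f(q)$ at parameter $\beta$. Applying the identity with the roles of $(u,v)$ and $(p,q)$ swapped—using criticality of $\{p,q\}$ and nonexpansiveness of the relevant triples from the hypotheses—recovers $\norm{f(v)-f(p)} = \norm{v-p}$ and $\norm{f(v)-f(q)} = \norm{v-q}$, completing the argument.
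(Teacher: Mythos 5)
Your proof is correct, and it takes a genuinely different route from the paper. The paper proves (a) with a geometric containment argument: it places $f(p)$ in the intersection of two balls $S_1,S_2$ centered at $f(u),f(v)$ and asserts that criticality of $\{q,u,v\}$ plus the crossing condition forces $S_1\cap S_2\subset S_0$, the ball of radius $\norm{p-q}$ about $f(q)$; part (b) is then obtained by affinely extending $f$ to the crossing point $x$ and applying (a) twice, and the criticality addendum is handled by a coplanarity/angle argument at the image crossing point ($\{u,p\}$ expansive would force $\{u,q\}$ contractive, contradicting nonexpansiveness). You instead run everything through one Stewart-type identity, whose linearity in the squared distance defects makes both inequalities, and more importantly all equality cases, completely transparent: when $\alpha\in(0,1)$ the vanishing of a nonnegative combination immediately yields all six critical pairs, with no angle argument needed. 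Your explicit treatment of the degenerate crossing $\alpha\in\{0,1\}$ — where the identity along $(u,v)$ only certifies criticality with one of $u,v$, fixed by locating $r'$ at parameter $\beta$ on the image segment and reapplying the identity with the roles of the segments swapped — is a genuine improvement in rigor: the paper's angle argument is stated for a generic interior crossing and glosses over the case where the intersection point coincides with $u$ or $v$ (allowed under the paper's definition that segments cross whenever their intersection is nonempty). What the paper's version buys is geometric intuition (the ball picture of Figure 2) at the cost of an asserted containment; what yours buys is a uniform, self-contained computation in which every inequality is an explicit convex combination, at the cost of being less visual. Your endpoint bookkeeping is also sound: distinctness of the four points and noncollinearity of $p,u,v$ guarantee $\beta\in(0,1)$ exactly when you need it.
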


\begin{proof}
(a) Consider the following $d$-dimensional balls: $S_0$ centered at $f(q)$ with radius $\norm{p - q}$, $S_1$ centered at $f(u)$ with radius $\norm{p - u}$, and $S_2$ centered at $f(v)$ with radius $\norm{p - v}$ (see \figref{fig:nonexpansive}). $\{p,u,v\}$ nonexpansive under $f$ implies $f(p) \in S_1\cap S_2$. $\{q,u,v\}$ critical and $(p,q)$ crossing $(u,v)$ implies $S_1\cap S_2 \subset S_0$. Because $f(p)\in S_0$, $\{p,q\}$ is nonexpansive under $f$.

(b) Let $x = u+t(v-u)$ be the intersection of $(p,q)$ and $(u,v)$ and let $x_f = f(u) + t(f(v)-f(u))$.  Repeated application of \lemref{lem:nonexpansive}(a) yields $\norm{x-i} \geq \norm{x_f-f(i)}$ for $i\in\{p,q\}$. Combining with $\norm{x-p}+\norm{x-q}=\norm{p-q}$ and the triangle inequality, $\norm{x_f-f(p)}+\norm{x_f-f(q)}\geq \norm{f(p)-f(q)}$, yields $\{p,q\}$ nonexpansive under $f$. Further, if $\{p,q\}$ is critical under $f$, then so is $\{p,q,x_f\}$. Segments $(f(p),f(q))$ and $(f(u),f(v))$ are coplanar crossing at $x_f$ such that $\{u,p\}$ expansive implies $\{u,q\}$ contractive under $f$. Since $\{p,q,u,v\}$ is nonexpansive, $\{p,q,u,v\}$ must be critical under $f$. 
\end{proof}

We will consider a \emph{polygon} $P$ to be a bounded closed figure in $\mathbb{R}^2$ bounded by finitely many line segments connected in a simple cycle. This definition restricts polygons to topological disks, and allows adjacent edges to be collinear. Let $V(P)$ denote the vertices of $P$, $\partial P$ denote the boundary of $P$, with $V(P)\subset \partial P\subset P$. An edge of $P$ is a line segment in $\partial P$ with endpoints at adjacent vertices. We say that a point $p\in P$ is \emph{visible} from a vertex $v\in V(P)$ if the line segment from $p$ to $v$ is in $P$. With the terminology in place, we can now state the problem (see \figref{fig:problem}).
\begin{problem}
Given a polygon $P$ in the plane with a boundary mapping $f: \partial P\rightarrow \mathbb{R}^d$, find an isometric mapping $g: P\rightarrow \mathbb{R}^d$ such that $g(\partial P) = f(\partial P)$.
\end{problem}
\begin{figure}[h]
\begin{tikzpicture}[>=latex,scale=0.75,x=1in,y=1in] 
\node[anchor=south west] (label) at (0,0){
	\opt{color}{\includegraphics[width=4.5in]{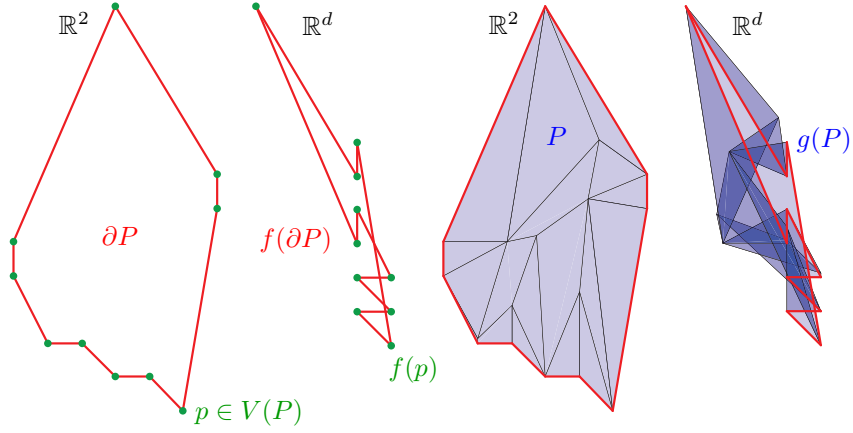}}
	\opt{gray}{\includegraphics[width=4.5in]{img/problem_bw.eps}}
      };
\node[anchor=north west, color=black] at (0.5,3) {$\mathbb{R}^2$};
\node[anchor=north west, color=black] at (2.2,3) {$\mathbb{R}^d$};
\node[anchor=north west, color=black] at (3.5,3) {$\mathbb{R}^2$};
\node[anchor=north west, color=black] at (5.2,3) {$\mathbb{R}^d$};
\node[anchor=north west, color=myred] at (0.8,1.5) {$\partial P$};
\node[anchor=north west, color=myred] at (1.9,1.5) {$f(\partial P)$};
\node[anchor=north west, color=myblue] at (3.9,2.2) {$P$};
\node[anchor=north west, color=myblue] at (5.65,2.2) {$g(P)$};
\node[anchor=north west, color=mygreen] at (1.45,0.3) {$p\in V(P)$};
\node[anchor=north west, color=mygreen] at (2.8,0.6) {$f(p)$};
\end{tikzpicture}
\vspace{-1pc}
\caption{Input and output to the hole problem showing notation.} 
\figlabel{fig:problem}
\end{figure}
If one exists, we call $g$ a \emph{solution} to the hole problem. Mapping $P$ into $\mathbb{R}$ requires infinitely many folds, so we restrict to $d\geq 2$ for the remainder.
\section{Necessary Condition}
\label{sec:necessary}

In this section, we define \emph{valid} boundary mappings and give a necessary condition for the hole problem under the weak assumption that the polygon boundary is folded at finitely many points.

\begin{definition}
\emph{(Valid Mapping)} Given polygon $P$ and boundary mapping $f:\partial P\rightarrow\mathbb{R}^d$, define $f$ to be \emph{valid} if $\partial P$ is nonexpansive under $f$ and adjacent vertices of $P$ are critical under $f$. 
\end{definition}

\begin{lemma}
\lemlabel{lem:straight}
Consider an instance of the hole problem with input polygon $P$ and boundary mapping $f:\partial P\rightarrow\mathbb{R}^d$ nonstraight at finitely many boundary points. If $f$ is not valid then the instance has no solution.
\end{lemma}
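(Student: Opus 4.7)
The plan is to argue contrapositively: assume a solution $g\colon P\to\mathbb{R}^d$ to the hole problem exists (so $g$ is isometric and restricts to $f$ on $\partial P$), and derive both defining conditions of a valid $f$. The main tool throughout will be that an isometric $g$ preserves the arc length of every curve in $P$, since the shortest-path metric on $P$ agrees with Euclidean distance along any straight segment contained in $P$.

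For the criticality of adjacent vertices $u,v\in V(P)$, I would look at the edge $e$ of $P$ between them. Since $e$ is a straight segment of length $\norm{u-v}$ contained in $P$, the isometric restriction $g|_e$ maps $e$ to a curve of arc length $\norm{u-v}$ from $f(u)$ to $f(v)$. Because $f$ is nonstraight at only finitely many points of $\partial P$, which we may assume coincide with vertices of $P$ (subdividing $P$ otherwise), $f|_e$ is straight on the interior of $e$, so its image is a single straight line segment of length $\norm{u-v}$. This forces $\norm{f(u)-f(v)}=\norm{u-v}$, i.e., $\{u,v\}$ is critical under $f$.

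For the nonexpansiveness of $\partial P$ under $f$, I would take arbitrary $u,v\in\partial P$. When the segment $(u,v)$ lies in $P$, $g$ maps it to a curve of arc length $\norm{u-v}$, and the triangle inequality applied to the image yields $\norm{f(u)-f(v)}\leq\norm{u-v}$ directly. When $(u,v)$ exits $P$ (possible in the non-convex case), I plan to invoke \lemref{nonexpansive}(a) iteratively: starting from the critical adjacent-vertex pairs established above as base cases, chain through a finite sequence of crossing segment pairs in $P$, each application promoting a ``critical crossing nonexpansive'' configuration into a new nonexpansive pair, until nonexpansiveness for the target $\{u,v\}$ is deduced.

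The hard part will be organizing this crossing-propagation argument in the non-convex case. A principled construction must choose intermediate vertices and diagonals carefully---for instance, via a triangulation of $P$ by diagonals whose endpoints are vertices, so that each diagonal's endpoints can be shown critical by repeated chaining through already-critical edges, and the diagonals then serve as the ``critical'' side in further applications of \lemref{nonexpansive}(a). The simple-connectivity of $P$ and its finite vertex set should make such a chain finite and effective, though the geometric case analysis at concave vertices is likely the most delicate step.
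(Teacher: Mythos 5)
Your first two steps are correct and essentially reproduce the paper's proof: the paper likewise absorbs the finitely many nonstraight points into $V(P)$ (collinear adjacent edges being allowed) and, for noncritical adjacent vertices, derives a contradiction from a bend at a point interior to an edge; your arc-length argument for pairs whose connecting segment lies in $P$ is the justification the paper leaves implicit. The gap is in your third step, the crossing-propagation plan for pairs $\{a,b\}$ whose segment exits $P$. \lemref{lem:nonexpansive}(a) requires the triple $\{q,u,v\}$ to be \emph{critical}, and part (b) requires the crossed pair $\{u,v\}$ to be critical. But a triangulation diagonal of $P$ is only \emph{nonexpansive} under a solution $g$ --- your own segment-in-$P$ argument gives exactly that and no more (fold a square in half: its diagonals strictly contract) --- so diagonals can never serve as the critical side of the crossing lemma, and the proposed induction has no base to chain from. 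Worse, for precisely the pairs you need, the segment $(a,b)$ may meet $\partial P$ only at its own endpoints (across the mouth of a notch), so there is no properly crossed edge or diagonal to chain through at all.

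In fact no repair of this step can succeed, because the claim is false in that case. Let $P$ be the U-shaped polygon $([0,3]\times[0,1])\cup([0,1]\times[1,2])\cup([2,3]\times[1,2])$, and let $g$ fix the bottom bar, rotate the left flap about the line $y=1$ by angle $t$ upward, and rotate the right flap about $y=1$ by $t$ downward. This $g$ is a piecewise-rigid isometric folding into $\mathbb{R}^3$ whose boundary image is nonstraight at only finitely many points, yet for the boundary points $a=(1,2)$ and $b=(2,2)$ we get $\norm{g(a)-g(b)}=\sqrt{1+4\sin^2 t}>1=\norm{a-b}$, so $\{a,b\}$ is expansive under $f=g|_{\partial P}$. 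Existence of a solution only bounds $\norm{f(a)-f(b)}$ by the shortest-path distance $d_P(a,b)$, which exceeds $\norm{a-b}$ exactly when the segment leaves $P$; Euclidean nonexpansiveness is automatic only for mutually visible pairs. Be aware that the paper's own proof dismisses this case in one sentence (``$\{a,b\}$ is also expansive under $g$, so $g$ cannot be isometric''), which is justified only when $d_P(a,b)=\norm{a-b}$, i.e.\ when the segment lies in $P$. So your instinct that the non-convex case needs a genuine argument is sound, but what it actually needs is a weakening of the statement --- reading the nonexpansiveness in the validity condition with respect to the shortest-path metric, or restricting it to pairs whose connecting segment lies in $P$ --- not a cleverer chaining argument.
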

\begin{proof}
Modify $V(P)$ to include boundary points nonstraight under $f$ (vertices adjacent to collinear edges are allowed), so that $f$ is straight for $\partial P\setminus V(P)$. Assume a solution $g$ exists and $f$ is not valid. Then either two points $a,b\in\partial P$ are expansive under $f$, or two adjacent vertices $u,v\in V(P)$ are noncritical. If the former, then $\{a,b\}$ is also expansive under $g$, so $g$ cannot be isometric. If the latter, then $f(p)$ is nonstraight for some $p$ on the edge from $u$ to $v$, a contradiction.
\end{proof}

To determine the validity of $f$, checking expansiveness between all pairs of points in $\partial P$ is impractical. Instead it suffices to show that the set of vertices is nonexpansive under $f$, and edges of $P$ map to congruent line segments.

\begin{lemma}
\lemlabel{lem:valid}
Given polygon $P$ and boundary mapping $f:\partial P\rightarrow\mathbb{R}^d$, $f$ is valid if and only if $V(P)$ is nonexpansive and edges of $P$ map to congruent line segments under $f$.
\end{lemma}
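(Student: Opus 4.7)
The plan is a two-direction argument that leverages \lemref{lem:nonexpansive}(a) (applied twice) for the nontrivial direction. For the ($\Rightarrow$) direction, validity of $f$ immediately gives that $V(P)\subset\partial P$ is nonexpansive. For any edge $e$ of $P$ with endpoints $u,v$, adjacent-vertex criticality yields $\norm{f(u)-f(v)}=\norm{u-v}$, and for any $p\in e$ we combine $\norm{p-u}+\norm{p-v}=\norm{u-v}$ with the nonexpansiveness of $\{p,u\}$ and $\{p,v\}$ and the triangle inequality in $\mathbb{R}^d$. All inequalities must then hold with equality, forcing $f(p)$ onto the segment from $f(u)$ to $f(v)$ at distance $\norm{p-u}$ from $f(u)$; hence $e$ maps to a congruent line segment.

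For the ($\Leftarrow$) direction, adjacent-vertex criticality is immediate from edge congruence, so the only work is to show $\partial P$ is nonexpansive. Given $a,b\in\partial P$, if they share an edge then edge congruence directly gives $\norm{a-b}=\norm{f(a)-f(b)}$. Otherwise $a$ and $b$ lie on distinct edges with endpoints $u_1,v_1$ and $u_2,v_2$, respectively. I would first establish, for every vertex $w\in V(P)$, that $\{w,a\}$ is nonexpansive (Stage 1): apply \lemref{lem:nonexpansive}(a) with $p=w$, $q=a$, $u=u_1$, $v=v_1$; the segments $(w,a)$ and $(u_1,v_1)$ meet at $a$, the triple $\{a,u_1,v_1\}$ is critical by edge congruence, and $\{w,u_1,v_1\}$ is nonexpansive since $V(P)$ is.

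Stage 2 then applies \lemref{lem:nonexpansive}(a) a second time with $p=a$, $q=b$, $u=u_2$, $v=v_2$: $\{b,u_2,v_2\}$ is critical by edge congruence, $\{a,u_2,v_2\}$ is nonexpansive because $\{a,u_2\}$ and $\{a,v_2\}$ are nonexpansive from Stage 1 and $\{u_2,v_2\}$ is critical, and segments $(a,b)$ and $(u_2,v_2)$ meet at $b$. The conclusion that $\{a,b\}$ is nonexpansive then finishes the proof.

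The main obstacle I foresee is the hypothesis in \lemref{lem:nonexpansive}(a) that $p,u,v$ be noncollinear: a vertex $w$ may lie on the line through an edge $(u_1,v_1)$, since the paper explicitly permits collinear adjacent edges. These degenerate cases can be handled separately by a direct triangle-inequality argument along the line, writing $\norm{w-a}$ as a sum of subsegment lengths and chaining edge congruence with vertex nonexpansiveness. Alternatively, both stages admit a single uniform algebraic proof by expanding $\norm{a-c}^2-\norm{f(a)-f(c)}^2$ for $a=(1-s)u_1+sv_1$ as $(1-s)A+sB$, where $A=\norm{u_1-c}^2-\norm{f(u_1)-f(c)}^2\geq 0$ and $B=\norm{v_1-c}^2-\norm{f(v_1)-f(c)}^2\geq 0$, using $\norm{u_1-v_1}=\norm{f(u_1)-f(v_1)}$ to cancel the cross terms. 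Handling this degeneracy cleanly is the only delicate part of the argument.
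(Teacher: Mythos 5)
Your proof takes essentially the same route as the paper's: the forward direction derives edge congruence from adjacent-vertex criticality plus nonexpansiveness of interior edge points with the endpoints (your equality chain just makes this explicit), and the converse applies \lemref{lem:nonexpansive}(a) twice---first vertex against edge point, then edge point against edge point---exactly as the paper does. Your extra attention to the collinear degenerate case (which the paper's proof passes over silently, even though it permits collinear adjacent edges, so the noncollinearity hypothesis of \lemref{lem:nonexpansive}(a) can genuinely fail) is sound, and your interpolation identity $\norm{a-c}^2-\norm{f(a)-f(c)}^2=(1-s)A+sB$ is correct and patches both stages uniformly.
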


\begin{proof}
If $f$ is valid, $ V(P)$ is nonexpansive under $f$ since $ V(P)\subset \partial P$, and edges map to congruent line segments because adjacent vertices are critical and points interior to edges are nonexpansive with endpoints. To prove the other direction, if edges of $P$ map to congruent line segments, adjacent vertices are critical and pairs of points on the same edge are nonexpansive (indeed critical) under $f$. To show that points from different edges are nonexpansive under $f$, consider vertex $p$ and point $q$ interior to the edge from vertex $u$ to $v$. By \lemref{lem:nonexpansive}(a), $\{q,p\}$ is nonexpansive under $f$ for any vertex $p$. Now consider point $q'\in \partial P$ not on the edge from $u$ to $v$. By the same argument as above, $\{q',u,v\}$ is nonexpansive under $f$, so by \lemref{lem:nonexpansive}(a), $\{q,q'\}$ is also nonexpansive. 
\end{proof}

\section{Bend Lines}
\label{sec:bend}

When the interior angle of the polygon boundary at a vertex decreases in magnitude under a valid boundary mapping, the local interior of the polygon will need to curve or bend to accommodate. For simplicity, we consider only single-fold solutions to satisfy such vertices, which will still be sufficient to construct a solution. We call these creases \emph{bend lines} made up of \emph{bend points}.

\begin{definition}
\emph{(Bend Points and Lines)} Given polygon $P$ with valid boundary mapping $f : \partial P\rightarrow \mathbb{R}^d$ and vertex $v\in V(P)$ adjacent to two vertices $\{u,w\}$ contractive under $f$, define $p\in P$ to be a \emph{bend point} of $(P,f,v)$ if there exists some $q\in \mathbb{R}^d$ (called a \emph{bend point image} of $p$) for which $\norm{p-i} = \norm{q-f(i)}$ for $i\in \{u,v,w\}$ and $p$ is visible from $v$. Further, define a \emph{bend line} of $(P,f,v)$ to be a maximal line segment of bend points of $(P,f,v)$, with one endpoint at $v$ and the other in $\partial P$; and let a \emph{bend line image} be a set of bend point images of the bend points in a bend line, congruent to the bend line. 
\end{definition}

A bend point corresponds to a point in the polygon such that triangles $\triangle pvu$ and $\triangle pvw$ isometrically map to triangles $\triangle qf(v)f(u)$ and $\triangle qf(v)f(w)$ respectively. Bend lines correspond to single folds of $P$ that locally satisfy isometry for the boundary from $u$ to $w$ through $v$. \lemref{lem:bend} represents bend points explicitly (see \figref{fig:bend}).
\begin{figure}[t]
\begin{tikzpicture}[>=latex,scale=0.75,x=1in,y=1in] 
\node[anchor=south west] (label) at (0,0){
	\opt{color}{\includegraphics[width=4.5in]{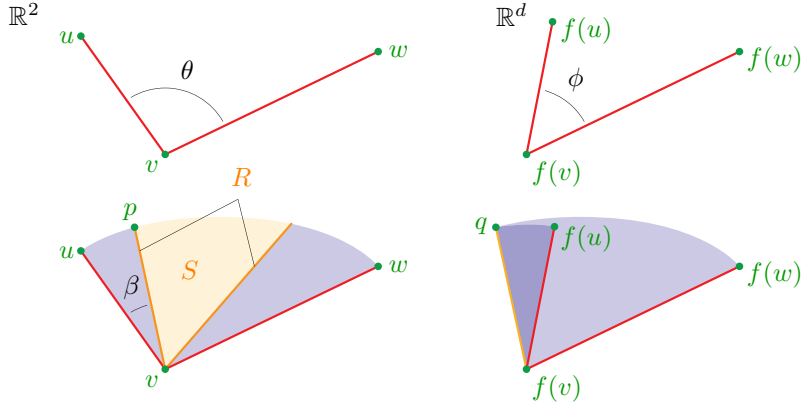}}
	\opt{gray}{\includegraphics[width=4.5in]{img/bend_bw.eps}}
      };
\node[anchor=north west, color=black] at (0.2,3) {$\mathbb{R}^2$};
\node[anchor=north west, color=black] at (3.6,3) {$\mathbb{R}^d$};
\node[anchor=north west, color=mygreen] at (0.55,2.8) {$u$};
\node[anchor=north west, color=mygreen] at (1.15,1.9) {$v$};
\node[anchor=north west, color=mygreen] at (2.85,2.7) {$w$};
\node[anchor=north west, color=mygreen] at (4.05,2.9) {$f(u)$};
\node[anchor=north west, color=mygreen] at (3.85,1.9) {$f(v)$};
\node[anchor=north west, color=mygreen] at (5.35,2.7) {$f(w)$};
\node[anchor=north west, color=black] at (1.4,2.6) {$\theta$};
\node[anchor=north west, color=black] at (4.1,2.55) {$\phi$};
\node[anchor=north west, color=black] at (1,1.1) {$\beta$};
\node[anchor=north west, color=myorange] at (1.4,1.2) {$S$};
\node[anchor=north west, color=myorange] at (1.75,1.85) {$R$};
\node[anchor=north west, color=mygreen] at (0.55,1.3) {$u$};
\node[anchor=north west, color=mygreen] at (1.15,0.4) {$v$};
\node[anchor=north west, color=mygreen] at (2.85,1.2) {$w$};
\node[anchor=north west, color=mygreen] at (1,1.6) {$p$};
\node[anchor=north west, color=mygreen] at (4.05,1.45) {$f(u)$};
\node[anchor=north west, color=mygreen] at (3.85,0.4) {$f(v)$};
\node[anchor=north west, color=mygreen] at (5.35,1.2) {$f(w)$};
\node[anchor=north west, color=mygreen] at (3.45,1.5) {$q$};
\end{tikzpicture}
\vspace{-0.5pc}
\caption{The bend points of $(P,f,v)$ showing relavent angles $\{\theta,\phi,\beta\}$, points $\{u,v,w,p,f(u),f(v),f(w),q\}$, and sets $\{R,S\}$. The upper figures show only the boundary mapping, while the lower images show filled, locally satisfying mappings of the interior.} 
\figlabel{fig:bend}
\end{figure}

\begin{lemma}
\lemlabel{lem:bend}
Consider polygon $P$ with valid boundary mapping $f : \partial P\rightarrow \mathbb{R}^d$ and vertex $v$ adjacent to two vertices $\{u,w\}$ contractive under $f$. Let  $\theta = \angle uvw$ be the internal angle of $P$ at $v$; let $\phi = \angle f(u)f(v)f(w)$; and let
\[R = \left\{p\in P \, \left| \, \begin{array}{l}\angle pvu \in\left\{\frac{\theta-\phi}{2},\frac{\theta+\phi}{2}\right\} \\ p\textnormal{ visible from }v\end{array} \right.\right\},\, S = \left\{p\in P \, \left| \, \begin{array}{l}\angle pvu \in\left[\frac{\theta-\phi}{2},\frac{\theta+\phi}{2}\right] \\ p\textnormal{ visible from }v\end{array} \right.\right\}.\] 
Then the set of bend points of $(P,f,v)$ is $R$ if $d=2$, and $S$ otherwise.
\end{lemma}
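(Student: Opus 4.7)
The plan is to reduce the existence of a bend-point image $q$ for a given candidate $p$ visible from $v$ to the existence of a unit vector in $\mathbb{R}^d$ making prescribed angles with two fixed unit directions at $f(v)$, and then to resolve this reduced problem separately in $\mathbb{R}^2$ and in $\mathbb{R}^d$ for $d\geq 3$.

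First I would fix $p\in P$ visible from $v$, let $a=\norm{p-v}$, $\alpha=\angle pvu$, $\beta=\angle pvw$, so $\alpha+\beta=\theta$. If a bend-point image $q$ exists, then SSS congruence makes $\triangle qf(v)f(u)\cong\triangle pvu$ and $\triangle qf(v)f(w)\cong\triangle pvw$, forcing $\norm{q-f(v)}=a$ and the unit direction $\hat q:=(q-f(v))/a$ to satisfy $\angle(\hat q,\hat u)=\alpha$ and $\angle(\hat q,\hat w)=\beta$, where $\hat u,\hat w$ are the unit directions from $f(v)$ to $f(u),f(w)$; the converse follows from the law of cosines. So existence of a bend-point image reduces to: does there exist a unit $\hat q\in\mathbb{R}^d$ at angle $\alpha$ from $\hat u$ and angle $\beta$ from $\hat w$, given $\angle(\hat u,\hat w)=\phi$? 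Note also that $\phi<\theta$, since applying the law of cosines to $\triangle uvw$ and $\triangle f(u)f(v)f(w)$ combined with the critical pairs $\{u,v\},\{v,w\}$ and contractive pair $\{u,w\}$ from validity forces the strict inequality.

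For $d=2$, the only two unit vectors at angle $\alpha$ from $\hat u$ are the reflections across $\hat u$'s line; a brief trigonometric computation shows their angles with $\hat w$ are $|\alpha-\phi|$ and $\alpha+\phi$ respectively. Equating either to $\beta=\theta-\alpha$ and using $\phi<\theta$ yields exactly $\alpha=(\theta+\phi)/2$ or $\alpha=(\theta-\phi)/2$, matching $R$.

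For $d\geq 3$, I would view $\hat u,\hat w,\hat q$ as the vertices of a spherical triangle on the unit sphere with side-lengths $\phi,\alpha,\beta$. Such a spherical triangle embeds in $\mathbb{R}^d$ iff the spherical triangle inequality $|\alpha-\beta|\leq\phi\leq\alpha+\beta$ holds; here $\phi\leq\alpha+\beta=\theta$ is automatic, while $|2\alpha-\theta|\leq\phi$ rearranges to $\alpha\in[(\theta-\phi)/2,(\theta+\phi)/2]$, matching $S$. The main obstacle is verifying this spherical-triangle-existence step, which I would handle by parametrizing $\hat q=c_1\hat u+c_2\hat w+c_3\hat n$ with $\hat n$ a unit vector orthogonal to $\hat u$ and $\hat w$ (available since $d\geq 3$), solving $\hat q\cdot\hat u=\cos\alpha,\ \hat q\cdot\hat w=\cos\beta$ for $(c_1,c_2)$, and observing that the residual $\norm{\hat q}=1$ determines $c_3^2$ as a positive multiple of the Gram determinant $1-\cos^2\alpha-\cos^2\beta-\cos^2\phi+2\cos\alpha\cos\beta\cos\phi$, which equals $4\sin s\sin(s-\alpha)\sin(s-\beta)\sin(s-\phi)$ for $s=(\alpha+\beta+\phi)/2$ and is nonnegative exactly under the triangle inequality.
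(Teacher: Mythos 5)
Your proof is correct and follows essentially the same route as the paper's: you reduce the existence of a bend point image to the existence of a unit direction at $f(v)$ making angles $\alpha$ and $\theta-\alpha$ with the directions toward $f(u)$ and $f(w)$, settle $d=2$ by coplanar angle arithmetic, and settle $d\geq 3$ by the spherical triangle inequality $|2\alpha - \theta| \leq \phi \leq \theta$, which is exactly the paper's argument that two hyper-cones with common apex at $f(v)$ and half-angles $\beta$ and $\theta-\beta$ have nonempty intersection meeting the sphere of radius $\norm{p-v}$. Your only real departure is one of rigor rather than strategy: where the paper simply asserts the nonempty cone intersection from those angle inequalities, you verify it explicitly by solving for $\hat q = c_1\hat u + c_2\hat w + c_3\hat n$ and invoking the Gram-determinant identity $1-\cos^2\alpha-\cos^2\beta-\cos^2\phi+2\cos\alpha\cos\beta\cos\phi = 4\sin s\,\sin(s-\alpha)\sin(s-\beta)\sin(s-\phi)$, which nicely supplies the proof of the step the paper leaves implicit (and your observation that validity plus contractivity of $\{u,w\}$ forces $\phi < \theta$, so $\sin s$ and $\sin(s-\phi)$ are positive, closes the sign analysis cleanly).
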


\begin{proof}
A point $p\in P$ visible from $v$ is a bend point of $(P,f,v)$ only if triangles $\triangle pvu, \triangle pvw$ are congruent to $\triangle qf(v)f(u), \triangle qf(v)f(w)$ respectively for some bend point image $q$ by definition. Let $\beta=\angle pvu$. If $d=2$, $\triangle pvu$ and $\triangle pvw$ must be coplanar. Then the internal angles of both triangles at $v$ must sum to $\theta$, and the magnitude of their difference $|(\theta-\beta) - \beta|$ must be  $\phi$. This condition is satisfied only when $\beta \in\left\{\frac{\theta-\phi}{2},\frac{\theta+\phi}{2}\right\}$. Thus for $d=2$, the set of bend points of $(P,f,v)$ is $R$.

For $d > 2$, triangles $\triangle qf(v)f(u), \triangle qf(v)f(w)$ need not be coplanar. Because $\{u,w\}$ is contractive under $f$, $\phi \geq |\theta-2\beta|$, so $\frac{\theta-\phi}{2} \leq \beta\leq \frac{\theta+\phi}{2}$, and points in $P\setminus S$ cannot be bend points. It remains to show that for each point $p\in S$ there exists a satisfying bend point image $q\in\mathbb{R}^d$. For a given $p$, $q$ must lie on two hyper-cones each with apex $v$, one symmetric about the segment from $f(v)$ to $f(u)$ with internal half angle $\beta$, and the other symmetric about the segment from $f(v)$ to $f(w)$ with internal half angle $\theta-\beta$. These hyper-cones have nonzero intersection $H$ because $(\theta-\beta)+\beta > \phi$ and $\phi \geq\max(\theta-\beta,\beta) -\min(\theta-\beta,\beta)$. The intersection of two hyper-cones with common apex $v$ is a set of rays emanating from $v$, so $H$ intersects the $(d-1)$-sphere centered at $f(v)$ with radius $\norm{p-v}$. Any point in this intersection satisfies all three constraints of a bend point image for any $p\in S$. 
 \end{proof} 

For every $d>2$, the set of bend points of $(P,f,v)$ is the same, but the set of bend point images increases with dimension. The set of bend point images is a ruled hypersurface of bend line images emanating from $f(v)$. In the case of $d=2$ above, hyper-cones are simply two rays, leading to disjoint line segments of bend points. For $d=3$ the set of bend points is a standard cone-like surface. Mapping generally to $\mathbb{R}^d$, the set is a ruled hypersurface of rays emanating from a point. 

\section{Split Points}
\label{sec:split}

Bend lines locally satisfy the boundary around a vertex with a single crease. We want to find the bend point on a bend line farthest from the vertex that remains nonexpansive with the rest of the boundary. We call such a point a \emph{split point}. 

\begin{definition}
\emph{(Split Points)} Given polygon $P$ with valid boundary mapping $f : \partial P\rightarrow \mathbb{R}^d$ and vertex $v$, contractive under $f$ with every visible nonadjacent vertex, adjacent to two vertices $\{u,w\}$ contractive under $f$, define $p$ to be a \emph{split point} of $(P,f,v)$, $q$ to be its \emph{split point image}, and $x$ to be its \emph{split end} if
\begin{enumerate}
\item $p$ is a bend point of $(P,f,v)$, with $q$ its bend point image;
\item $\norm{p-i} \geq \norm{q-f(i)}$ for $i\in V(P)$;
\item $\norm{p-x} = \norm{q-f(x)}$ for some $x\in V(P)\setminus\{u,v,w\}$; and
\item $p$ is visible from $x$.
\end{enumerate}
\end{definition}

\begin{lemma}
\lemlabel{lem:splitline}
Given polygon $P$ with valid boundary mapping $f : \partial P\rightarrow \mathbb{R}^d$ and vertex $v$ adjacent to two vertices $\{u,w\}$ contractive under $f$ with $v$ contractive under $f$ with any visible nonadjacent vertex, there exists a split point/image/end triple $(p,q,x)$ for every bend line/image pair $(L,L_f)$ of $(P,f,v)$ with $p\in L$ and $q\in L_f$. 
\end{lemma}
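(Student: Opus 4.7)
I plan to parameterize the bend line by arclength from $v$: write $p(t) \in L$ for $t \in [0, T]$ with $p(0) = v$, $p(T) \in \partial P$, and let $q(t) \in L_f$ be the corresponding image, so that $\norm{p(t) - i} = \norm{q(t) - f(i)}$ for $i \in \{u, v, w\}$ by the bend point property. For each $i \in V(P)$ define $\phi_i(t) = \norm{p(t) - i} - \norm{q(t) - f(i)}$; this is continuous in $t$, identically zero for $i \in \{u, v, w\}$, and nonnegative at $t = 0$ by validity of $f$ (with strict positivity at $t = 0$ for non-adjacent $i$ visible from $v$, by the contractive hypothesis). The split-point conditions 2--4 translate to $\phi_i(t) \geq 0$ for every $i \in V(P)$ together with $\phi_x(t) = 0$ for some $x \in V(P) \setminus \{u,v,w\}$ visible from $p(t)$.

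\textbf{Locating a critical vertex.} Let $t^* = \sup\{t \in [0,T] : \phi_i(s) \geq 0 \text{ for all } i \in V(P) \text{ and all } s \in [0,t]\}$; by continuity $\phi_i(t^*) \geq 0$ for all $i$. I split on whether $t^* < T$ or $t^* = T$. If $t^* < T$, maximality forces $\phi_x(t^*) = 0$ for some $x \in V(P) \setminus \{u,v,w\}$, since $\phi_u, \phi_v, \phi_w$ are identically zero and if no other $\phi_i$ vanished at $t^*$, continuity would allow extending $t^*$. If $t^* = T$, then the endpoint $r = p(T) \in \partial P$ supplies a candidate: if $r \in V(P)$, the inequality $\phi_r(T) \le 0$ combined with condition~2 forces $\phi_r(T) = 0$ (and $x = r$ is trivially visible from itself); if $r$ is interior to an edge $(a,b)$, then $\norm{r-a} + \norm{r-b} = \norm{a-b} = \norm{f(a)-f(b)}$ together with the triangle inequality on $q(T), f(a), f(b)$ and condition~2 forces $\phi_a(T) = \phi_b(T) = 0$. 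Since $L$ emerges strictly inside $\angle uvw$, the terminal edge is not incident to $v$, so at least one candidate lies in $V(P) \setminus \{u,v,w\}$ (outside the degenerate triangle case, which has no non-adjacent vertices).

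\textbf{Visibility via iterative replacement.} The main obstacle is verifying condition~4: visibility of $x$ from $p(t^*)$. The plan is to iterate a replacement argument whenever $x$ is invisible. If $x$ fails to be visible from $p(t^*)$, the segment $(p(t^*), x)$ first leaves $P$ at some vertex $j$ or at an interior point of an edge $(a, b)$. In the vertex case, collinearity of $p(t^*), j, x$ gives $\norm{p(t^*)-j} + \norm{j-x} = \norm{p(t^*)-x}$, and combining with the triangle inequality $\norm{q(t^*)-f(j)} + \norm{f(j)-f(x)} \geq \norm{q(t^*)-f(x)}$, nonexpansiveness of $V(P)$, and condition~2 at $t^*$, pins down $\phi_j(t^*) = 0$; moreover $j$ is visible from $p(t^*)$ since the segment $(p(t^*), j)$ lies in $P$ up to the first exit $j \in \partial P$. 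In the edge case, \lemref{nonexpansive}(b) applied to the crossing segments $(p(t^*), x)$ and $(a, b)$, with $q(t^*)$ as the image of $p(t^*)$, forces $\phi_a(t^*) = \phi_b(t^*) = 0$. Any replacement landing in $\{u, v, w\}$ can be sidestepped by continuing the reduction past it; since $V(P)$ is finite, iteration terminates at a visible $x \in V(P) \setminus \{u,v,w\}$ with $\phi_x(t^*) = 0$, yielding the desired split point/image/end triple $(p(t^*), q(t^*), x)$.
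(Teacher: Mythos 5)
Your outline follows the paper's strategy (parameterize the bend line, take the maximal feasible $t$, then repair visibility by walking along boundary crossings using \lemref{lem:nonexpansive}), but it has a genuine gap at $t^*=0$. Your supremum is taken over a set containing $t=0$, and your strict-positivity remark covers only nonadjacent vertices \emph{visible} from $v$; the hypothesis permits a nonadjacent vertex $x'$ that is invisible from $v$ and \emph{critical} with $v$ under $f$, for which $\phi_{x'}(0)=0$. Nothing in your argument prevents $\phi_{x'}$ from turning negative immediately, giving $t^*=0$, $p=v$, $q=f(v)$ --- and then no valid triple exists: conditions (3) and (4) of a split point would require a nonadjacent vertex critical with $v$ and visible from $p=v$, which is exactly what the lemma's hypothesis forbids. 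Ruling this out is where that hypothesis does its real work, and the paper devotes a full paragraph to it: if such a blocking $x'$ lies in the infinite sector induced by $\angle uvw$, the segment from $v$ to $x'$ crosses some edge $(a,b)$, \lemref{lem:nonexpansive}(b) forces $\{a,b,x',v\}$ critical, whence $u,w\in\triangle abv$ and $\{u,v,w\}$ is critical, contradicting $\{u,w\}$ contractive; if $x'$ lies outside the sector, the segment from $p(\delta)$ to $x'$ crosses $(v,u)$ or $(v,w)$ and \lemref{lem:nonexpansive}(b) yields $\phi_{x'}(\delta)\geq 0$. Your proof needs this argument (or an equivalent) and does not contain it.

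Your visibility-repair step also falls short of a proof. In the edge case the replacement hands you $a$ and $b$, neither of which need be visible from $p(t^*)$, and ``since $V(P)$ is finite, iteration terminates at a visible $x$'' is not justified: you never discard candidates, so the reduction can revisit invisible vertices without a progress measure. Likewise ``sidestepped by continuing the reduction past it'' when a replacement lands in $\{u,v,w\}$ is undefined --- beyond the first exit the segment leaves $P$, so your construction no longer applies. The paper closes these holes with extra structure: it restricts attention to the set $X$ of \emph{blocking} vertices (those with $d(t^*,x)=0$ \emph{and} $d(t^*+\delta,x)<0$), uses that blocking property to exclude crossings of the edges $(v,u)$, $(v,w)$ and the point $v$, applies the contrapositive of \lemref{lem:nonexpansive}(a) to show at least one endpoint of each crossed edge is itself blocking, and then exhibits a vertex $y\in\triangle acp$ visible from $p$, forcing a contradiction after two rounds of the argument. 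On the positive side, your explicit handling of $t^*=T$ via the far endpoint of the bend line (squeezing $\phi_a(T)=\phi_b(T)=0$ out of the triangle inequality) is sound and covers a case the paper leaves implicit, and your vertex-case replacement (the first exit vertex is automatically visible, with $\phi_j(t^*)=0$ by collinearity) is correct; but as written the proposal has real gaps at $t^*=0$ and in the termination of the repair loop.
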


\begin{proof}
Given bend line/image pair $(L,L_f)$ we construct $(p,q,x)$. Parameterize $L$ so that $p(t)$ is the unique point in $L$ such that $\norm{p(t)-v} = t$ for $t\in[0,\ell]$ where $\ell$ is the length of $L$; and let $q(t)$ be the corresponding bend point image of $p(t)$ in $L_f$. For any $t\in[0,\ell]$ and vertex $x$, let $d(t,x) = \norm{p(t)-x}-\norm{q(t)-f(x)}$. Let $t^*$ be the maximum $t\in(0,\ell]$ for which $d(t,x) \geq 0$ for all $x\in V(P)$, and let $X$ be the set of such vertices $x\in V(P)\setminus\{u,v,w\}$ for which $d(t^*,x)=0$, and $d(t^*+\delta,i) < 0$ for all $\delta\in(0,\varepsilon]$ for some $\varepsilon>0$. If we can prove there exists some $x\in X$ from which $p(t^*)$ is visible, then $p = p(t^*)$ is a split point with $q = q(t^*)$ its split point image, satisfying the split point conditions by construction. 

Suppose for contradiction that $t^*$ does not exist so that for all $t\in(0,\ell]$, $d(t,x) < 0$ for some $x\in V(P)$. Because $d$ is continuous and $d(0,x) \geq 0$ for all $x\in V(P)$, there exists a vertex $x'\in V(P)\setminus\{u,v,w\}$ not visible from and critical with $v$ under $f$ such that $d(\delta,x') < 0$ for all $\delta \in (0,\varepsilon]$ for some $\varepsilon>0$. Either $x'$ is in the infinite sector $C$ induced by $\angle uvw$ or not. If the former, the line segment from $v$ to $x'$ must cross some edge $(a,b)$ of $P$ and $\{a,b,x',v\}$ is critical under $f$ by \lemref{lem:nonexpansive}(b). Since neither $a$ nor $b$ can be visible from $v$, then $u$ and $w$ must be in $\triangle abv$, and $\{u,v,w\}$ must be critical, contradicting $\{u,w\}$ contractive under $f$. Alternatively $x'$ is not in $C$, and for every $\delta\in(0,\varepsilon]$ for some $\varepsilon>0$, the line segment from $p(\delta)$ to $x'$ crosses either $(v,u)$ or $(v,w)$. By \lemref{lem:nonexpansive}(b), $d(\delta,x') \geq 0$, a contridiction, so $t^*$ exists.
\begin{figure}[t]
\begin{tikzpicture}[>=latex,scale=0.75,x=1in,y=1in] 
\node[anchor=south west] (label) at (0,0){
	\opt{color}{\includegraphics[width=4.5in]{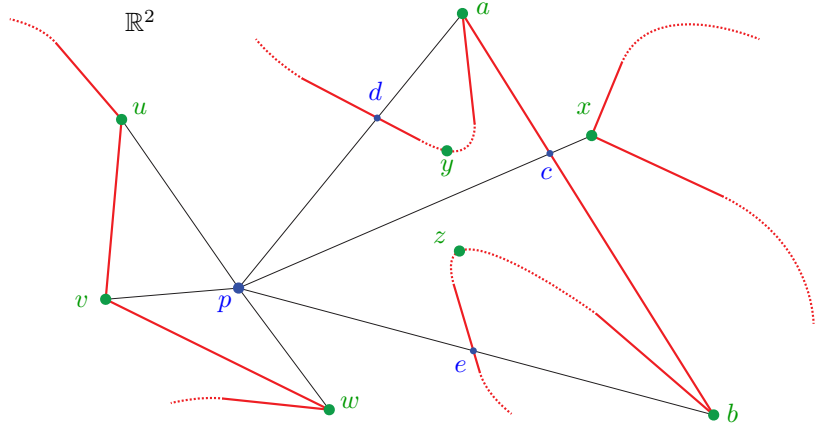}}
	\opt{gray}{\includegraphics[width=4.5in]{img/visible_bw.eps}}
      };
\node[anchor=north west, color=black] at (1,3) {$\mathbb{R}^2$};
\node[anchor=north west, color=mygreen] at (0.65,1) {$v$};
\node[anchor=north west, color=mygreen] at (1.05,2.35) {$u$};
\node[anchor=north west, color=mygreen] at (2.5,0.3) {$w$};
\node[anchor=north west, color=myblue] at (1.65,1.00) {$p$};
\node[anchor=north west, color=mygreen] at (4.15,2.35) {$x$};
\node[anchor=north west, color=mygreen] at (3.45,3.05) {$a$};
\node[anchor=north west, color=myblue] at (3.9,1.9) {$c$};
\node[anchor=north west, color=mygreen] at (5.2,0.25) {$b$};
\node[anchor=north west, color=myblue] at (2.7,2.5) {$d$};
\node[anchor=north west, color=mygreen] at (3.2,1.95) {$y$};
\node[anchor=north west, color=myblue] at (3.3,0.55) {$e$};
\node[anchor=north west, color=mygreen] at (3.15,1.45) {$z$};
\end{tikzpicture}
\caption{Visibility of $p$. If $x\in X$ is not visible from $v$, one of $\{a,b,y,z\}\in X$ will be.}
\figlabel{fig:visible}
\vspace{-0.7pc}
\end{figure}

We now prove that $p$ is visible from some $x\in X$. Suppose for contradiction that $p$ is not visible from any $x\in X$ so that for each $x$ there exists point $c\in\partial P$, the boundary crossing closest to $p$ on the segment from $p$ to $x$. $c$ cannot be strictly interior to edge $(v,u)$ or $(v,w)$ because \lemref{lem:nonexpansive}(b) implies $\norm{p(t^*+\delta)-x} = \norm{q(t^*+\delta)-f(x)}$ for all $d\in(0,\varepsilon]$ for some $\varepsilon$, a contridiction. And $c$ cannot be $v$ or else $\norm{p(t)-x} = \norm{q(t)-f(x)}$ for all $t\in[0,\ell]$. So $c$ crosses some other edge $(a,b)$ (see \figref{fig:visible}). Then \lemref{lem:nonexpansive}(b) implies $\norm{p-i} = \norm{q-f(i)}$ for $i\in\{a,b\}$, and the contrapositive of \lemref{lem:nonexpansive}(a) implies for at least one vertex $i\in\{a,b\}$, $\norm{p(t^*+\delta)-i} < \norm{q(t^*+\delta)-f(i)}$ for all $\delta\in(0,\varepsilon]$ for some $\varepsilon>0$. Without loss of generality assume $i=a$. Because $a\in X$, $p$ cannot be visible from $a$. Let $d\in\partial P$ be the boundary crossing closest to $p$ on the segment from $p$ to $a$.  There must exist some vertex $y$ in triangle $\triangle acp$ from which $p$ is visible because the boundary of the polygon entering the triangle at $d$ must return to $a$ without crossing edge $(c,p)$. By the same argument, at least one of $\{y,b\}$ is in $X$, and since $p$ is visible from $y$, $b\in X$. Replacing $(b,e,z)$ for $(a,d,y)$ in the argument above, one of $\{y,z\}$ is in $X$. But $p$ is visible from both, a contradiction.
\end{proof}
 
 \begin{lemma}
 \lemlabel{lem:splitpolygon}
Given polygon $P$ with valid boundary mapping $f : \partial P\rightarrow \mathbb{R}^d$ and vertex $v$, contractive under $f$ with every visible nonadjacent vertex, adjacent to two vertices $\{u,w\}$ contractive under $f$, a split point/image/end triple of $(P,f,v)$ exists and can be identified in $O(d|V(P)|)$ time.
 \end{lemma}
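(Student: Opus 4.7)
The plan is to derive existence directly from \lemref{splitline} and then turn its constructive content into an $O(d|V(P)|)$-time algorithm. For existence, the contractiveness of $\{u,w\}$ forces $\phi<\theta$ in the notation of \lemref{bend}, so the angular range characterizing bend points is nondegenerate and at least one bend line $L$ of $(P,f,v)$ exists together with a corresponding bend line image $L_f$. Applying \lemref{splitline} to $(L,L_f)$ yields a split triple $(p,q,x)$, settling existence.

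For the running time, I would parameterize $L$ by arc length from $v$, producing unit-speed $p(t)\in P$ and corresponding $q(t)\in L_f$ for $t\in[0,\ell]$. For every vertex $x\in V(P)$, both $\norm{p(t)-x}^2$ and $\norm{q(t)-f(x)}^2$ are monic quadratics in $t$, so their difference is linear in $t$ and the equality $\norm{p(t)-x}=\norm{q(t)-f(x)}$ reduces to a single linear equation, solvable in $O(d)$ time via one inner product in $\mathbb{R}^d$. Validity of $f$ gives $\norm{v-x}\geq\norm{f(v)-f(x)}$ at $t=0$, so $d(t,x)$ either remains nonnegative for all $t\in[0,\ell]$ or crosses zero at a unique positive root $t_x$, and the sign of the linear coefficient distinguishes the two cases. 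Setting $t^*=\min\bigl(\ell,\min_{x} t_x\bigr)$, where the inner minimum ranges over $x\in V(P)\setminus\{u,v,w\}$ whose $d$ decreases through zero, and taking $x^*$ to be an argminimizer, yields a candidate $(p(t^*),q(t^*),x^*)$ automatically satisfying conditions~1--3 of the split-point definition in cumulative $O(d|V(P)|)$ time.

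It remains to verify condition~4, that $p(t^*)$ is visible from $x^*$. A single segment-versus-polygon sweep tests this in $O(|V(P)|)$ time. If the test succeeds we are done; otherwise, the visibility argument concluding the proof of \lemref{splitline} is already constructive: an obstructing edge $(a,b)$ of the segment from $p(t^*)$ to $x^*$ has an endpoint in the critical set $X$, and a failure of visibility from that endpoint exposes another element of $X$ inside a strictly smaller sub-triangle. I would iterate this substitution while charging each step to distinct polygon edges, so that the total visibility work stays linear.

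I expect the visibility resolution to be the main obstacle. The algebraic work---one linear equation per vertex followed by a minimum---transparently fits $O(d|V(P)|)$, but a naive retry on visibility failures could balloon to $\Theta(|V(P)|^2)$. The crux will be faithfully implementing the amortization implicit in \lemref{splitline}'s visibility proof so that locating a visible witness in $X$ costs only linear total time.
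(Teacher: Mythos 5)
Your proposal follows the paper's proof route exactly: the paper's entire argument is to pick any bend line/image pair via \lemref{lem:bend} (in $O(d)$ time) and then run the construction from the proof of \lemref{lem:splitline}, observing that it amounts to a $d$-dimensional comparison at each vertex, for $O(d|V(P)|)$ total. Your existence argument (contractiveness of $\{u,w\}$ plus criticality of adjacent vertices gives $\phi<\theta$ by the law of cosines, so the bend-point wedge is nondegenerate) and your linearization---unit-speed parameterizations of $L$ and $L_f$ make the $t^2$ terms cancel, so each vertex contributes one linear inequality in $t$ computable with a single inner product in $\mathbb{R}^d$---are correct, and are in fact a sharper justification of the per-vertex cost than the paper's one-line assertion. (That $t^*>0$, i.e.\ that no vertex with $d(0,x)=0$ has negative slope, is exactly what the contradiction argument in \lemref{lem:splitline} rules out, so deferring to that lemma is legitimate.)

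The one place where your proposal is not yet a proof is the step you yourself flag: certifying condition~4 within the time bound. Note first that the paper's proof is entirely silent on this point---it implicitly treats the split end supplied by \lemref{lem:splitline} as free---so you are being more careful than the source. However, your amortization sketch does not give the claimed bound as stated: each candidate test is itself a $\Theta(|V(P)|)$ segment-against-polygon sweep, so charging \emph{iterations} to distinct edges still permits $\Theta(|V(P)|^2)$ total work, and the substitution argument in \lemref{lem:splitline} is an existence proof that does not obviously bound the number of retries by a constant. A cleaner repair that stays within budget: compute $X$ explicitly from your linear equations (the tight vertices whose linear coefficient is negative, $O(d)$ each); if $|X|=1$, \lemref{lem:splitline} guarantees its unique element is a valid split end and no test is needed; otherwise compute the visibility polygon of $p(t^*)$ inside the simple polygon $P$ in $O(|V(P)|)$ time (a standard linear-time computation, e.g.\ Joe--Simpson) and scan $X$ against it, with \lemref{lem:splitline} guaranteeing a visible witness exists. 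With that substitution your argument is complete, matches the paper's structure, and is more rigorous than the paper's own two-line proof.
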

 \begin{proof}
This result follows directly by choosing any bend line/image pair of $(P,f,v)$ according to \lemref{lem:bend}, then constructing the split point/image/end triple specified by \lemref{lem:splitline}. Choosing a bend line/image pair can be done in $O(d)$ time. Constructing the split point/image/end triple requires a $d$-dimensional comparison at each vertex yielding total construction time $O(d|V(P)|)$.
 \end{proof}

\section{Partitions}
\label{sec:partition}

To find an overall solution to the hole problem, we will repeatedly split a polygon in half, solve each piece recursively, and then join the pieces back together. Specifically, we want to find a \emph{partition} consisting of two \emph{partition polygons} together with respective boundary mappings such that: the partition polygons exactly cover the original polygon; the partition polygons intersect, and only on their boundaries; each partition function maps the partition polygon boundaries into the same dimensional space as the original function;  the original boundary mapping of the polygon boundary is preserved by the partition functions; the intersection of the partition polygons map to the same place under both partition functions; and the partition functions are valid.

\begin{definition}
\emph{(Valid Partition)} Given polygon $P$ and valid boundary mapping $f : \partial P\rightarrow \mathbb{R}^d$, define $(P_1,P_2,f_1,f_2)$ to be a \emph{valid partition} of $(P,f)$ if the following properties hold:
\[\begin{array}{ll}
(1)\quad P_1,P_2\textnormal{ polygons with }P = P_1 \cup P_2; \quad & (4)\quad P_1\cap P_2 = \partial P_1\cap\partial P_2= L\neq \emptyset; \\
(2)\quad f_1:\partial P_1\rightarrow\mathbb{R}^d, f_2:\partial P_2\rightarrow\mathbb{R}^d; & (5) \quad f(p) = \cases{f_1(p) &  p\in \partial P\cap \partial P_1, \\ f_2(p) & \textnormal{otherwise;}} \\
(3)\quad f_1(p) = f_2(p) \textnormal{ for } p\in L; & (6)\quad f_1,f_2 \textnormal{ valid.}
\end{array}\]
\end{definition}

\section{Algorithm}
\label{sec:theorem}

\begin{theorem}
\thmlabel{main}
Given polygon $P$ and boundary mapping $f : \partial P\rightarrow \mathbb{R}^d, d \geq 2$ nonstraight at finitely many boundary points, an isometric mapping $g:P\rightarrow\mathbb{R}^d$ with $g(\partial P) = f(\partial P)$ exists if and only if $f$ is valid. A solution can be computed in polynomial time.
\end{theorem}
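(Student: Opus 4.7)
The necessary direction is immediate from \lemref{straight}. For sufficiency and polynomial-time construction, I would proceed by induction via a recursive partition algorithm. The base case is when $V(P)$ is critical under $f$, which in particular holds for every triangle with valid $f$ (since a triangle with preserved edges is rigid): then $f|_{V(P)}$ extends to an isometric embedding of the $2$-dimensional affine hull of $V(P)$ into $\mathbb{R}^d$, and its restriction to $P$ gives a solution $g$ matching $f$ on $\partial P$, since each edge maps to an equal-length line segment by validity.

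When $V(P)$ is not critical (so $|V(P)| \geq 4$), the algorithm chooses one of two partition strategies. If some nonadjacent vertex pair $\{a,b\}$ is visible (i.e., $(a,b) \subset P$) and critical under $f$, I partition $P$ along the chord $(a,b)$; validity of the resulting $(P_1,P_2,f_1,f_2)$ follows from \lemref{valid}, since vertex nonexpansiveness descends from $f$ and the new edge is critical by choice. Otherwise every visible nonadjacent vertex pair is contractive, and I produce a bend-requiring vertex via the Two Ears Theorem: at any ear $v$, the next-neighbor chord $(u,w)$ lies inside $P$, hence $\{u,w\}$ is visible and therefore contractive by the case assumption, so $v$ is bend-requiring; moreover $v$ is contractive with every visible nonadjacent vertex by the same case assumption. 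Thus \lemref{splitpolygon} applies to $v$ and yields a split triple $(p,q,x)$. I partition $P$ along the L-shape $(v,p) \cup (p,x)$ with $p$ a new shared vertex mapped to $q$; validity of the partition reduces via \lemref{valid} to the two new edges being critical (split point conditions (1) and (3)) and to vertex nonexpansiveness (split point condition (2)). In either case, by induction I obtain solutions $g_1,g_2$ for $(P_i,f_i)$; they agree on the shared chord since $f_1,f_2$ do, so they glue into an isometric $g:P\to\mathbb{R}^d$ with $g(\partial P)=f(\partial P)$.

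The main obstacle is termination and the polynomial time bound. Chord splits strictly decrease $|V|$ in each partition polygon, but L-splits introduce a new vertex $p$ and may create new bend-requiring conditions at $p$ or $x$ in the subproblems, so vertex count alone is not a valid potential. I would bound the total number of partition steps across the recursion by $O(|V(P)|)$, for instance by charging each step to a distinct edge of the final planar crease pattern on $P$ (which has linearly many edges on a topological disk by Euler's formula), or alternatively by maintaining a combinatorial potential such as the total number of visible contractive vertex pairs across all current sub-polygons and showing it strictly decreases. Combined with the $O(d|V|)$ per-step cost of \lemref{splitpolygon} and the $O(|V|^2)$ cost of detecting a critical visible chord, this yields polynomial overall time. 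Establishing strict decrease of the termination potential---and in particular verifying that the new vertex $p$ does not regenerate bend-requiring conditions that undo progress---is the subtlest step and the main technical obstacle.
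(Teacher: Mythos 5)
Your overall architecture---necessity from \lemref{lem:straight}, chord splits along visible critical pairs, L-splits at split points via \lemref{lem:splitpolygon}, gluing as in \lemref{lem:combine}, and a triangle base case---matches the paper, and your use of the Two Ears Theorem is a clean (arguably cleaner) substitute for the paper's \lemref{lem:existcondition}: an ear $v$ gives a visible nonadjacent pair $\{u,w\}$, which is contractive under the case assumption, so the hypotheses of \lemref{lem:splitpolygon} hold at $v$. But the step you flagged as the main obstacle is a genuine gap, and neither of your proposed repairs works as stated. The Euler-formula charge is circular: each L-split manufactures a new vertex $p$, so the size of the ``final planar crease pattern'' is itself a function of the number of steps you are trying to bound, and Euler's formula only bounds its edges linearly in that unknown vertex count. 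The potential ``number of visible contractive vertex pairs'' is not shown to decrease either: the new vertex $p$ is critical only with $\{u,v,w,x\}$ and can be strictly contractive with, and visible from, many other vertices of the subpolygons; worse, $p$ itself can be a new bend-requiring vertex (its neighbors $v$ and $x$ in a subpolygon may form a contractive pair), so bare L-splits can regenerate exactly the configuration that triggered them.

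The paper closes this gap with a bundling trick you are missing. Since the split point $p$ is in particular a bend point, $\norm{p-i}=\norm{q-f(i)}$ for $i\in\{u,v,w\}$, so after an L-split each of the two subpolygons contains a visible nonadjacent \emph{critical} pair, namely $\{u,p\}$ in one and $\{w,p\}$ in the other; the paper therefore defines its Routine 2 to be the L-split \emph{immediately followed} by a chord split (\lemref{lem:critical}) in each subpolygon along these pairs. With the potential $\Phi=\sum_j\left(|V(P_j)|-3\right)$ over the current subpolygons, a chord split changes $\Phi$ by $2-3=-1$, a bare L-split by $4-3=+1$, and the bundled Routine 2 by $4+2\cdot 2-3\cdot 3=-1$; hence $\Phi$ strictly decreases with every routine call and the recursion terminates after exactly $|V(P)|-3$ calls, each costing $O(dn)$ (critical pairs are computed once in $O(dn^2)$ and maintained thereafter), for $O(dn^2)$ total. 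In short: your construction, case analysis, and base case are sound, and your termination worry is exactly right, but the missing idea is that the $+1$ cost of an L-split can always be paid down immediately by the two chord splits it guarantees.
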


The theorem implies that the necessary condition in \lemref{lem:straight} is also sufficient. Our approach is to iteratively divide $P$ into valid partitions and combine them back together. We partition non-triangular polygons into smaller ones differently depending on which of two properties $(P,f)$ satisfies. First we show that $(P,f)$ satisfies at least one of these properties.

\begin{lemma}
\lemlabel{lem:existcondition}
For every polygon $P$ with $| V(P)| > 3$ and valid boundary mapping $f : \partial P\rightarrow \mathbb{R}^d$, either (a) there exist two nonadjacent vertices $\{u,v\}$ critical under $f$ and visible from each other, or (b) there exists a vertex $v\in V(P)$ adjacent to two vertices $\{u,w\}$ contractive under $f$, or (c) both exist.
\end{lemma}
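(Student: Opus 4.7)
The plan is to invoke the classical Two Ears Theorem of Meisters: any simple polygon $P$ with $|V(P)| \geq 4$ has at least two \emph{ears}, i.e., vertices $v\in V(P)$ whose two polygon-neighbors $u, w$ are joined by a diagonal lying entirely inside $P$. Fixing any such ear $v$ produces a pair of vertices $\{u, w\}$ visible from each other. Since $|V(P)| > 3$, $u$ and $w$ are also nonadjacent, because they are separated along $\partial P$ by at least one edge on the side not through $v$; this is exactly where the hypothesis $|V(P)| > 3$ is needed.

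With such an ear triple $(v, u, w)$ in hand, I would perform a direct dichotomy on whether $\{u, w\}$ is contractive under $f$. By validity, $\partial P$ is nonexpansive, so $\norm{f(u) - f(w)} \leq \norm{u - w}$. If the inequality is strict, then $\{u, w\}$ is contractive and $v$ witnesses conclusion (b). Otherwise equality holds, so $\{u, w\}$ is critical; combined with nonadjacency and mutual visibility from the ear property, this is precisely conclusion (a). If distinct ears happen to fall on opposite sides of the dichotomy, conclusion (c) holds.

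The argument is essentially one line once the Two Ears Theorem is available, so there is no real obstacle beyond citing it correctly. The only point I would double-check carefully is the nonadjacency of $u$ and $w$ in degenerate configurations (for example, when $P$ has collinear consecutive edges, which the paper's polygon definition explicitly permits). Even there, $u, v, w$ are three distinct vertices on $\partial P$ with $|V(P)| \geq 4$, so $u$ and $w$ remain nonadjacent and the case analysis goes through unchanged.
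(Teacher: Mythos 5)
Your proof is correct, and it takes a genuinely different route from the paper. The paper argues by contradiction: assuming neither (a) nor (b) holds, every vertex $v$ has its two neighbors $\{u,w\}$ critical under $f$; since $u$ and $w$ are nonadjacent, $\neg$(a) forces them to be non-visible, so some vertex $x$ interior to $\triangle uvw$ is visible from $v$, and a rigidity step (a point nonexpansive with all corners of a critical triangle must itself be critical with them) makes $\{x,v\}$ a critical, nonadjacent, mutually visible pair, contradicting $\neg$(a). You instead black-box the Two Ears Theorem and run a direct dichotomy on the ear's base pair: validity gives $\norm{f(u)-f(w)}\leq\norm{u-w}$, with strict inequality yielding (b) and equality yielding (a). The key move, which the paper does not make, is choosing the visible pair to be the two neighbors of a common vertex, so the same pair serves simultaneously as the witness for (a) and for (b); this makes your argument shorter, direct rather than by contradiction, and free of the rigidity step. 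What the paper's version buys is self-containment --- though it secretly relies on the same geometric ingredient (a vertex visible inside a blocked triangle, which is the engine of every diagonal/ear existence proof). One caveat you only half-address: since the paper's polygons admit collinear consecutive edges, the thing to verify in degenerate configurations is not the nonadjacency of $u$ and $w$ (immediate from $|V(P)|\geq 4$, as you say) but that the Two Ears Theorem itself survives straight vertices. It does: a triangulation by diagonals still exists (the lowest-leftmost vertex is strictly convex, so the standard diagonal-existence induction goes through), its dual tree has at least two leaves when $|V(P)|\geq 4$, and each leaf triangle is an ear; straight vertices simply can never serve as ear apexes, which is harmless.
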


\begin{proof}
Suppose for contradiction that there exists some $(P,f)$ such that no two nonadjacent vertices critical under $f$ are visible from each other and no vertex is adjacent to two vertices contractive under $f$. Consider any vertex $v$ which, by the contrapositive of the latter condition, will be adjacent to two vertices $\{u,w\}$ critical under $f$. Since  $|V(P)| > 3$, $u$ and $v$ are nonadjacent and cannot be visible from each other, so there must be at least one other distinct vertex $x$ interior to $\triangle uvw$ visible from vertex $v$. But since $x$ is nonexpansive with $\{u,v,w\}$ under $f$, $\{x,u,v,w\}$ must be critical under $f$, a contradiction.
\end{proof}

\begin{lemma}
\lemlabel{lem:critical}
Consider polygon $P$ with valid boundary mapping $f : \partial P\rightarrow \mathbb{R}^d$ containing nonadjacent vertices $\{u,v\}$ critical under $f$ with $u$ visible from $v$. Construct polygon $P_1$ from the vertices of $P$ from $u$ to $v$, and $P_2$ from the vertices of $P$ from $v$ to $u$. Construct boundary mapping functions $f_1:\partial P_1\rightarrow\mathbb{R}^d$, $f_2:\partial P_2\rightarrow\mathbb{R}^d$ so that $f_1(x) = f(x)$ for $x\in V(P_1)$, $f_2(x) = f(x)$ for $x\in V(P_2)$, with $f_1,f_2$ mapping edges of $P_1,P_2$ to congruent line segments. Then $(P_1,P_2,f_1,f_2)$ is a valid partition.
\end{lemma}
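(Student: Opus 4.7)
The plan is to verify the six properties of valid partition in order, noting that properties (1), (2), (4), (5) are essentially structural and follow from the construction, while (3) and (6) require short arguments using the criticality of $\{u,v\}$ under $f$ and the validity of $f$ respectively.

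First I would observe that visibility of $u$ from $v$ means the segment $L$ from $u$ to $v$ lies in $P$, with its relative interior in the interior of $P$ (since $u,v$ are nonadjacent vertices of the simple polygon $P$). This chord partitions $P$ into two closed simply-connected regions, each bounded by a simple cycle; these are exactly the polygons $P_1$ and $P_2$, giving property (1) and the intersection identity $P_1\cap P_2 = \partial P_1\cap \partial P_2 = L$ of property (4). Properties (2) and (5) hold by the definitions of $f_1, f_2$.

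For property (3), both $P_1$ and $P_2$ have $L$ as an edge with endpoints $u,v$, on which $f_i(u)=f(u)$ and $f_i(v)=f(v)$ by construction. Since $\{u,v\}$ is critical under $f$, the segment from $f(u)$ to $f(v)$ in $\mathbb{R}^d$ has length $\norm{u-v}$, so it is the unique line segment congruent to $L$ with those prescribed endpoints. Both $f_1$ and $f_2$ map $L$ to a congruent line segment with matching endpoints, and so must agree pointwise on $L$.

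For property (6), I would invoke \lemref{lem:valid}: it suffices to verify that $V(P_i)$ is nonexpansive under $f_i$ and that edges of $P_i$ map to congruent line segments, the latter being immediate from the construction. Every vertex of $P_i$ is also a vertex of $P$ (the vertex set of $P_i$ is a subchain of $V(P)$ together with $u$ and $v$, themselves in $V(P)$), and $f_i$ agrees with $f$ on each such vertex; so nonexpansiveness of $V(P)$ under $f$, guaranteed by validity of $f$, transfers directly to $V(P_i)$ under $f_i$. The only subtle point, and the closest thing to a genuine obstacle, is confirming that the chord $L$ really produces two simple polygons in degenerate configurations, e.g.\ when edges of the boundary chain become collinear with $L$; the paper's permissive definition of polygon, which allows adjacent collinear edges, absorbs such cases without extra work.
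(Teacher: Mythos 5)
Your proposal is correct and follows essentially the same route as the paper: verify the six properties of a valid partition directly, using criticality of $\{u,v\}$ to get agreement of $f_1$ and $f_2$ on $L$ (property (3)) and invoking \lemref{lem:valid} for property (6), with the geometric properties following from the chord construction. Your write-up is somewhat more explicit than the paper's (notably the uniqueness argument for the congruent image of $L$ and the observation that $V(P_i)\subseteq V(P)$ transfers nonexpansiveness), but the substance is the same.
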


\begin{proof}
Because $P_1$ and $P_2$ are constructed by splitting $P$ along line segment $L\subset P$ from $u$ to $v$, $P = P_1\cup P_2$ and $L = P_1\cap P_2 = \partial P_1\cap\partial P_2$, satisfying properties (1) and (2) of a valid partition. Property (3) is satisfied by definition. Property (4) holds because $f$ is valid, $\{u,v\}$ is critical, and points in $L$ are nonexpansive with points in $\partial P_1$ and $\partial P_2$ by \lemref{lem:nonexpansive}(a). Property (5) holds by construction. Lastly, Property (6) holds because $f_1,f_2$ satisfy the conditions in \lemref{lem:valid} by construction.
\end{proof}

\begin{lemma}
\lemlabel{lem:noncritical}
Consider polygon $P$ with valid boundary mapping $f : \partial P\rightarrow \mathbb{R}^d$ and vertex $v\in V(P)$, contractive under $f$ with every visible nonadjacent vertex, adjacent to two vertices $\{u,w\}$ contractive under $f$. Let $(p,q,x)$ be a split point/image/end triple of $(P,f,v)$. Construct polygon $P_1$ from $p$ and the vertices of $P$ from $v$ to $x$, and $P_2$ from $p$ and the vertices of $P$ from $x$ to $v$. Construct boundary mapping functions $f_1:\partial P_1\rightarrow\mathbb{R}^d$, $f_2:\partial P_2\rightarrow\mathbb{R}^d$ so that $f_1(x) = f(x)$ for $x\in V(P_1)\setminus p$, $f_2(x) = f(x)$ for $x\in V(P_2)\setminus p$, $f_1(p) = f_2(p) = q$, with $f_1,f_2$ mapping edges of $P_1,P_2$ to congruent line segments. Then $(P_1,P_2,f_1,f_2)$ is a valid partition. 
\end{lemma}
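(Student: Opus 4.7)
The plan is to verify the six defining properties of a valid partition in turn, with almost every clause reducing to a definitional condition of the bend or split point.

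First, I would establish properties (1), (2), (4), and (5). Since $p$ is a bend point it is visible from $v$, and by split point condition 4 it is visible from $x$, so the polyline $L$ from $v$ through $p$ to $x$ lies entirely in $P$. As $P$ is a topological disk and $v, x$ are distinct boundary vertices, cutting along $L$ partitions $P$ into two topological disks $P_1$ and $P_2$ bounded by simple polygonal cycles, yielding (1) and (4). Properties (2) and (5) hold by construction.

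Next, for property (3), I would verify that $f_1$ and $f_2$ agree on $L$. They match $f$ at the vertices $v$ and $x$ and both send $p$ to $q$ by construction. The bend point condition supplies $\norm{v - p} = \norm{f(v) - q}$, while split point condition 3 supplies $\norm{p - x} = \norm{q - f(x)}$. Hence $f_1$ and $f_2$ parametrize the same two line segments in $\mathbb{R}^d$ with matching endpoints and so coincide on all of $L$.

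For property (6), I plan to invoke \lemref{valid}: it suffices to check that edges of $P_i$ map to congruent segments (immediate by construction) and that $V(P_i)$ is nonexpansive under $f_i$. Distances between pairs of original $P$-vertices inside $V(P_i)$ are nonexpansive because $f$ is valid. The only new pairs involve $p$, and split point condition 2, $\norm{p - i} \geq \norm{q - f(i)}$ for all $i \in V(P)$, was tailored precisely to guarantee these.

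The main obstacle I expect is property (1): confirming that the polyline through $v$, $p$, $x$ really does cleanly separate $P$ into two simple polygons (e.g., that the polyline is simple and meets $\partial P$ only at $v$ and $x$, except possibly when $p$ itself lies on $\partial P$). This is a topological claim resting on the visibility guarantees built into the bend and split point definitions combined with $P$ being a disk; every other clause is an essentially direct consequence of matching the right defining equality or inequality of the split point triple to the right partition property.
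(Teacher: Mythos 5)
Your proposal is correct and follows essentially the same route as the paper's proof: verify the six partition properties directly, with (1)--(5) following from the construction and the split-point/bend-point equalities, and (6) reduced to \lemref{valid} using split-point condition 2 for nonexpansiveness of pairs involving $p$ and the bend-point equality plus condition 3 for criticality of the new edges $(v,p)$ and $(p,x)$. The topological separation step you flag as the main obstacle is handled just as briefly in the paper (``splitting $P$ along two line segments fully contained in $P$,'' justified by the visibility of $p$ from $v$ and from $x$), so your treatment is, if anything, slightly more explicit.
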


\begin{proof}
Because $P_1$ and $P_2$ are constructed by splitting $P$ along two line segments fully contained in $P$, $P = P_1\cup P_2$ and $P_1\cap P_2 = \partial P_1\cap\partial P_2$, satisfying properties (1) and (2) of a valid partition. Property (3) is satisfied by definition. Property (4) holds because $(P,f)$ is valid, $V(P_1)$ and $V(P_2)$ are nonexpansive, with adjacent vertices critical under $f$ by definition of a split point/image, and points in the new line segments are nonexpansive with points in $\partial P_1$ and $\partial P_2$ by \lemref{lem:nonexpansive}(a). Property (5) holds by construction. Lastly, Property (6) holds because $f_1,f_2$ satisfy the conditions in \lemref{lem:valid} by construction.
\end{proof}

Next, we establish the base case for our induction. Specifically a triangle with a valid boundary mapping of its boundary has a unique isometric mapping of its interior consistent with the provided boundary condition.

\begin{lemma}
\lemlabel{lem:triangle}
Given polygon $P$ with $| V(P)| = 3$ and valid boundary mapping $f : \partial P\rightarrow \mathbb{R}^d$, there exists a unique isometric mapping $g: P\rightarrow\mathbb{R}^d$ such that $g(B) = f(B)$. 
\end{lemma}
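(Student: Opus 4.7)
The plan is to exhibit $g$ as the restriction of a uniquely determined plane-to-plane rigid motion, verify it matches $f$ along each edge, and then argue that any isometric competitor is forced to coincide with it by standard Euclidean rigidity.

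First I would observe that because $P$ is a triangle, every pair of its vertices is adjacent, so validity of $f$ makes the three vertex pairs critical under $f$. The images $f(u),f(v),f(w)$ therefore form a triangle in $\mathbb{R}^d$ congruent to $\triangle uvw$; since $P$ is a nondegenerate polygon, these three image points are non-collinear and span a 2-dimensional affine plane $\Pi\subset\mathbb{R}^d$.

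For existence, define $g:P\to\Pi$ as the restriction of the unique plane-to-plane rigid motion $\mathbb{R}^2\to\Pi$ that sends $u,v,w$ to $f(u),f(v),f(w)$; this rigid motion exists and is unique by the standard fact that an affine isometry of the plane is pinned down by the images of three non-collinear points together with the vertex-triangle congruence just established. To check $g|_{\partial P}=f$, I would go edge by edge: for any $p$ on edge $(u,v)$, \lemref{lem:valid} puts $f(p)$ on the segment from $f(u)$ to $f(v)$ at distance $\norm{p-u}$ from $f(u)$, and $g(p)$ satisfies exactly the same description by construction, so $g(p)=f(p)$; similarly for edges $(v,w)$ and $(w,u)$. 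Because $P$ is convex the shortest-path metric on $P$ agrees with the Euclidean metric, so $g$ is isometric in the origami sense.

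For uniqueness, suppose $g':P\to\mathbb{R}^d$ is another isometric extension of $f$. The rigidity fact I would invoke is that any distance-preserving map whose domain contains an affinely independent spanning set extends uniquely to an affine isometry of the ambient affine span; applied to $\{u,v,w\}\subset P$ this makes $g'$ the restriction of an affine isometry $\mathbb{R}^2\to\mathbb{R}^d$. This affine isometry and the rigid motion underlying $g$ agree at the three non-collinear points $u,v,w$, hence are equal, so $g'=g$. The only mild obstacle is invoking the rigidity fact that an isometric embedding of a planar region with nonempty interior into Euclidean space is affine; this is Euclidean folklore rather than something specific to this paper, so I would simply cite it or give a one-line Cayley--Menger argument rather than re-derive it. Everything else is a direct unwinding of definitions and \lemref{lem:valid}.
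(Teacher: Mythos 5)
Your existence half is correct and is essentially the paper's own proof: the paper likewise parameterizes $P$ in barycentric form and writes down the unique affine map sending $u,v,w$ to $f(u),f(v),f(w)$, with the boundary agreement following from validity (edges mapping to congruent segments, as in \lemref{lem:valid}). The problem is your uniqueness half.

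In this paper ``isometric'' means isometric for the shortest-path metric --- the origami sense --- not Euclidean-distance-preserving. A competitor $g'$ is only guaranteed to preserve lengths of curves, hence to be extrinsically \emph{nonexpansive}; a priori it may fold or bend. The folklore rigidity fact you invoke (a distance-preserving map is affine, provable via Cayley--Menger) is true only for maps preserving chordal distances, and it is false for the paper's class of maps: a flat fold of the triangle, or a rolling of it onto a cylinder in $\mathbb{R}^3$, is isometric in the paper's sense but not affine, so you cannot cite this fact without first establishing that $g'$ preserves Euclidean distances on all of $P$ --- which is exactly what is in question. Moreover, as you apply it, the step is a non sequitur: knowing $g'$ is distance-preserving on the three points $\{u,v,w\}$ yields a unique \emph{affine} isometry agreeing with $g'$ at those three points; it does not show that $g'$ itself is that affine map off those points. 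The repair must use the boundary condition (or at least the straightness of the edges), not just the vertices: for an interior point $p$, take the chord from a vertex $u$ through $p$ meeting the opposite edge at $s$. Since the boundary images form a congruent triangle, $\norm{f(u)-f(s)} = \norm{u-s} = \norm{p-u} + \norm{p-s}$, while nonexpansiveness of the isometric $g'$ gives $\norm{g'(p)-f(u)} \leq \norm{p-u}$ and $\norm{g'(p)-f(s)} \leq \norm{p-s}$; the triangle inequality then forces both to be equalities with $g'(p)$ on the segment from $f(u)$ to $f(s)$, which pins $g'(p)$ to the affine map's value. To be fair, the paper's own proof is terse here as well --- it simply asserts uniqueness of the constructed isometry --- but your proposed justification, as written, conflates the two notions of isometry and would not survive scrutiny, whereas the chord argument above closes the gap with tools already in the paper.
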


\begin{proof}
Because $f$ is valid, the vertices of $P$ are critical under $f$. $\partial P$ and $f(\partial P)$ are congruent triangles, so their convex hulls are isometric. Specifically, if $P$ with vertices $\{u,v,w\}$ is parameterized by $P = \{p(a,b) = a(v-u)+b(w-u)+u \,|\, a,b\in[0,1], a+b\leq1\}$, then the affine map $g: P\rightarrow \mathbb{R}^d$ defined by 
\[g(p(a,b)\in P) = a[f(v)-f(u)]+b[f(w)-f(u)]+f(u)\]
is a unique isometry for $g(B) = f(B)$.
\end{proof}

Lastly we show that we can combine isometric mappings of valid partitions into larger isometric mappings.

\begin{lemma}
\lemlabel{lem:combine}
Consider polygon $P$ with valid boundary mapping $f : \partial P\rightarrow \mathbb{R}^d$, with valid partition $(P_1,P_2,f_1,f_2)$. Given isometric mappings $g_1: P_1\rightarrow \mathbb{R}^d$, $g_2: P_2\rightarrow \mathbb{R}^d$ with $g_1(\partial P_1) = f_1(\partial P_1)$, $g_2(\partial P_2) = f_2(\partial P_2)$, the mapping $g : P \rightarrow\mathbb{R}^d$ defined below is also isometric, with $g(\partial P) = f(\partial P)$: 
\[g(p\in P) = \cases{g_1(p) &p\in P_1, \\ g_2(p) &\textnormal{otherwise.} }\]
\end{lemma}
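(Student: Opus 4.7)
The plan is to verify in turn that the combined map $g$ is single-valued on $P$, that it restricts to $f$ on $\partial P$, and that it is isometric under the shortest-path metric.

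The first two parts are pure bookkeeping from the valid-partition axioms. The two cases in the definition of $g$ can only conflict on $L = P_1\cap P_2$; since $L\subseteq\partial P_1\cap\partial P_2$, the hypothesis that each $g_i$ realizes $f_i$ on $\partial P_i$ (pointwise, as is needed for the folding to line up) combined with property (3), $f_1 = f_2$ on $L$, gives $g_1 = g_2$ on $L$, so $g$ is unambiguous. For boundary matching, any $p\in \partial P$ lies in $\partial P\cap \partial P_i$ for some $i$, and the definition of $g$ combined with property (5) then yields $g(p) = g_i(p) = f_i(p) = f(p)$.

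The main step is the isometry claim. Given any rectifiable curve $\gamma:[0,1]\to P$, I would use the closed set $\gamma^{-1}(L)$ to decompose $[0,1]$ into countably many maximal subintervals on which $\gamma$ lies entirely in $P_1$ or entirely in $P_2$; on each such subinterval $g$ agrees with the corresponding $g_i$ and so preserves arc length by hypothesis. Countable additivity of arc length then yields that $g\circ\gamma$ has the same length as $\gamma$; the symmetric argument lifts rectifiable curves in $g(P)$ back through $g_1$ and $g_2$ to curves in $P$ of equal length. Taking infima over curves on each side gives equality of shortest-path distances $d_P(u,v) = d_{g(P)}(g(u),g(v))$ for all $u,v\in P$, which is the desired isometry.

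The main obstacle I anticipate is handling curves for which $\gamma^{-1}(L)$ is not a finite set, in particular when $\gamma$ runs along $L$ for a positive-length portion. On such portions property (3) of the partition guarantees that $g_1$ and $g_2$ agree pointwise, so $g$ is well defined without ambiguity and contributes the same arc length from either side. Once this subtlety is resolved, the remainder is a routine countable-additivity check and requires no further geometric input beyond the isometry of $g_1$ and $g_2$.
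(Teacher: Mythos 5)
Your well-definedness and boundary-matching steps match the paper's (property (3) plus the pointwise reading of $g_i(\partial P_i) = f_i(\partial P_i)$ forces $g_1 = g_2$ on $L$, and property (5) gives $g = f$ on $\partial P$), and the forward half of your isometry argument is sound, though heavier than necessary. The paper exploits the fact that a shortest path $K$ between two points of a polygon is a \emph{finite} polygonal chain, while $L = \partial P_1 \cap \partial P_2$ is a finite union of segments; hence $K$ splits into finitely many subsegments, each contained in $P_1$ or $P_2$ with endpoints in $L$, each preserved in length by $g_1$ or $g_2$, with consistent endpoints. Finite additivity closes the argument, and the fat-Cantor-set subtlety you flag for $\gamma^{-1}(L)$ never arises. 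Your countable decomposition of arbitrary rectifiable curves proves something more general, but at the cost of measure-theoretic care (splitting arc length over a positive-measure time set spent in $L$) that the paper's finite polygonal reduction avoids entirely.

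The genuine flaw is your reverse direction. The conclusion you aim for, $d_P(u,v) = d_{g(P)}(g(u),g(v))$ with $d_{g(P)}$ the intrinsic metric of the image set, is \emph{false} for foldings and is not what ``isometric'' means in this paper: $g$ is not injective, so curves in $g(P)$ cannot in general be lifted back through $g_1$ and $g_2$. A curve in the image may switch between the two sheets at any point of $\mathbb{R}^d$ where $g_1(P_1)$ and $g_2(P_2)$ happen to coincide --- not only along $g(L)$ --- and such a curve has no continuous preimage in $P$. Concretely, fold a square in half: two points at intrinsic distance $1$ map to the same image point, so the image-set shortest-path distance is $0 \neq 1$. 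The notion of isometry used here (shortest-path metric, following the cited definition) requires only that $g$ preserve the lengths of curves in $P$, which is exactly your forward direction and exactly what the paper proves. Deleting the lifting paragraph and the final displayed equality leaves you with a correct proof; retaining them asserts a false statement.
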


\begin{proof}
First, $g(\partial P) = f(\partial P)$  because the partition is valid. Consider the shortest path $K$ between points $p,q\in P$ composed from a finite set of line segments. Suppose for contradiction that $g(K)$ is not the same length as $K$. Every point in $K$ either lies in $P_1$, $P_2$, or both by property (1) of a valid partition. Split $K$ into a connected set of line segments, each segment fully contained in either $P_1$ or $P_2$ with endpoints in $P_1\cap P_2$. Because $g_1$ and $g_2$ are isometric, these line segments remain the same length under $g$. Further, the endpoints of adjacent segments map to the same place under $g_1$ and $g_2$ by definition of a valid parition. The total length of $g(K)$ is the sum of the lengths of the intervals, the same length as $K$, a contradiction. 
\end{proof}

Now we are ready to prove the theorem.

\begin{proof}
\lemref{lem:straight} implies that $f$ is valid if $g$ exists. We show $g$ exists for valid $f$ by construction. Partition $(P,f)$ with $|V(P)| > 3$ as follows. If $(P,f)$ contains two nonadjacent vertices $\{u,v\}$ critical under $f$ and visible from each other, divide using Routine 1: partition using the construction in \lemref{lem:critical}. Otherwise divide using Routine 2: partition using the construction in \lemref{lem:noncritical}, applying Routine 1 to each partitioned polygon immediately after. Note that both polygons generated by the construction from \lemref{lem:noncritical} are guaranteed to contain two nonadjacent vertices critical under $f$ and visible from each other, namely $\{u,p\}$ and $\{w,p\}$, so each can be divided using Routine 1. Recursively fill each partitioned polygon with an isometric mapping of their interior and combine them into a mapping $g : P\rightarrow \mathbb{R}^d$ using the construction in \lemref{lem:combine}. Since the partitions are valid, $g$ is isometric with $g(\partial P) = f(\partial P)$. Construct isometries for triangular polygons, the base case of the recursion, according to \lemref{lem:triangle}. 

To show the recursion terminates, consider state $i$ where $P$ is partitioned into a set of $n_i$ polygons $\mathcal{P}_i = \{P_1,\ldots,P_{n_i}\}$. Define potential $\Phi_i = \sum_{P_j\in\mathcal{P}_i}(|V(P_j)| - 3)$ with $\Phi_0 = |V(P)| - 3$. Partitioning a polygon using Routine 1 yields state $i+1$ with $\Phi_{i+1} = \Phi_i - 1$: \lemref{lem:critical} adds two vertices, the number of polygons increases by one, and $2-3 = -1$. Partitioning a polygon using Routine 2 also yields $\Phi_{i+1} =  \Phi_i - 1$: \lemref{lem:noncritical} adds four vertices, \lemref{lem:critical} adds two vertices with each application, the number of polygons increases by three, and $4+2\times 2-3\times 3 = -1$. \lemref{lem:existcondition} ensures that one of the routines can always be applied to any non-triangular polygon. When $\Phi_i = 0$, all partitioned polygons are triangles and no polygon can be partitioned further. The iteration terminates after $\Phi_0$ calls to either routine. 

Let $n$ be the number of vertices $|V(P)|$ in the input polygon. At the start of the algorithm, all critical vertex pairs can be identified naively in $O(dn^2)$ time. Application of either routine requires at most $O(dn)$ time, and both routines can update and maintain new critical vertex pairs in partition polygons at no additional cost. Each routine is called no more than $O(n)$ times. Only a linear number of triangles are produced and the construction of each $g_i$ takes constant time. The running time of the entire construction is thus $O(dn^2)$, which is polynomial. 
\end{proof}

\begin{figure}[h]
\includegraphics[width=4.5in]{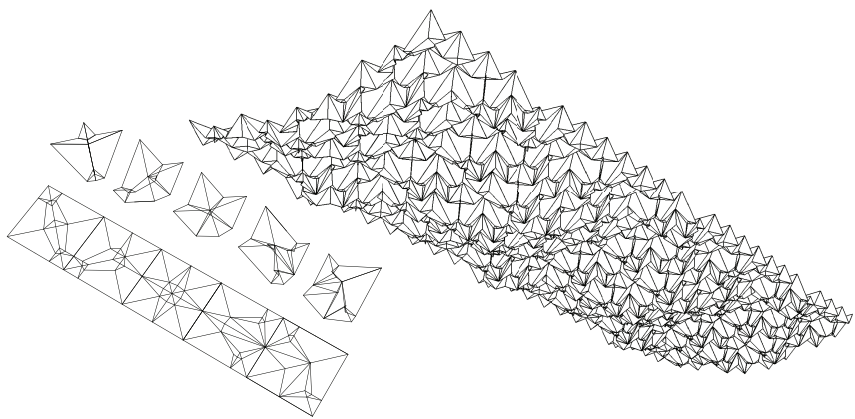}
\caption{A three-dimensional abstract tessellation formed by tiling five different square units, each corner in either a binary low or high state. Units were designed using this algorithm having common boundaries, connected to form single sheet tessellations.}
\figlabel{fig:surface}
\end{figure}
\begin{figure}[h]
\opt{color}{\includegraphics[width=4.5in]{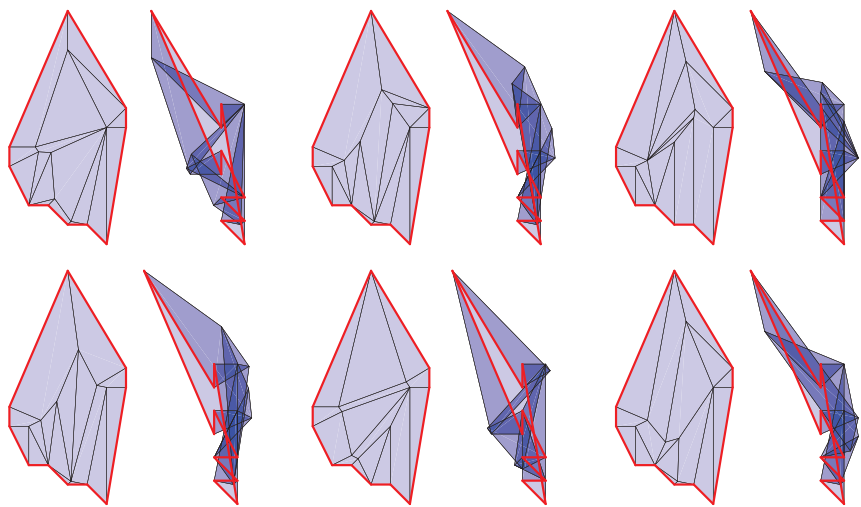}}
\opt{gray}{\includegraphics[width=4.5in]{img/matlab_bw.eps}}
\caption{Various solutions for the same input polygon and boundary mapping found by our MATLAB implementation for $d=2$.}
\figlabel{fig:matlab}
\end{figure}

\section{Applications}
\label{sec:application}

Much of the intuition for this algorithm was developed while working on the design of various three-dimensional tessellations, specifically Maze Folding \cite{MazeFolding_Origami5} and a private commission designing an origami chandelier for Moksa, a restaurant in Cambridge, MA (see \figref{fig:surface}). A version of this algorithm was implemented for flat-folds $(d=2)$ in 2010 using MATLAB (see \figref{fig:matlab}). We leave an implementation of this algorithm in 3D for future work. 

\section{Conclusion}
\label{sec:conclusion}

We have proposed an algorithm for finding isometric mappings consistent with prescribed boundary mappings that runs in polynomial time. This algorithm was inspired by the universal molecule construction; instead of insetting an input polygon perimeter at a constant rate from all edges at once, our algorithm insets each vertex serially as far as possible. Our construction cannot find all possible isometric solutions, though the algorithm provides a rich family of solutions given choice of bend line and image with each application of Routine 2: two choices when $d=2$ and an infinite set of choices for $d>2$. This algorithm can be generalized by not insetting vertices all the way to split points, and by solving vertices locally with more than one crease at a vertex. We conjecture that adding such flexibility would allow construction of the entire space of isometric solutions following a similar procedure to our construction. 

Recall that the proposed algorithm does not address self intersection and cannot guarantee the existence of a valid layer ordering for the isometries found; however, because the space of solutions is large for a generic input, one might be able to construct non-self-intersecting solutions by directing the algorithm's decisions appropriately through the solution space. Additionally, the proposed algorithm only addresses instances for $f$ folded at finitely many points. It is conceivable that a similar algorithm could be used to design curved foldings. We leave these as open problems.

\section*{Acknowledgments}

The authors would like to thank Barry Hayes for introducing us to this problem, and to Robert Lang and Tomohiro Tachi for helpful discussions. E. Demaine supported in part by NSF ODISSEI grant EFRI-1240383 and NSF Expedition grant CCF-1138967.

\bibliographystyle{alpha}
\bibliography{hole_ku}

\end{document}